\newcolumntype{P}[1]{>{\centering\arraybackslash}p{#1}}
\DeclareMathAlphabet{\mathantt}{OT1}{antt}{li}{it}
\DeclareMathAlphabet{\mathpzc}{OT1}{pzc}{m}{it}
\newtheorem{theorem}{Theorem}
\newtheorem{lemma}[theorem]{Lemma}
\DeclareFontFamily{OT1}{pzc}{}
\DeclareFontShape{OT1}{pzc}{m}{it}%
  {<-> s * [1.1] pzcmi7t}{}
\DeclareMathAlphabet{\mathpzc}{OT1}{pzc}%
                     {m}{it}
\def\U{\mathcal{U}}
\def\F{\mathcal{F}}
\def\V{\mathcal{V}}
\DeclareMathOperator{\argmin}{\arg\min}
\title{Routing and Scheduling of Network Flows with Deadlines and Discrete Capacity Allocation}
\author{
Ghafour Ahani,
Pawel Wiatr, and
Di Yuan

\thanks{G.\ Ahani, P.\ Wiatr, and D.\ Yuan are with the Department of Information Technology, Uppsala University, 751 05 Uppsala, Sweden (e-mail:
\{ghafour.ahani, pawel.wiatr, di.yuan\}@it.uu.se).}

%\IEEEauthorblockA{\IEEEauthorrefmark{1}Institute of Mobile Communications, Southwest Jiaotong University, Chengdu, China.}

%\IEEEauthorblockA{\IEEEauthorrefmark{2}Department of Information Technology, Uppsala University, Sweden}

%\IEEEauthorblockA{
%\emph{Email: dengtaoswjtu@foxmail.com, ghafour.ahani@it.uu.se, pzfan@home.swjtu.edu.cn, di.yuan@it.uu.se}}
 \vspace*{-2em}
}
\begin{document}

\maketitle
\begin{abstract}

Joint scheduling and routing of data flows with deadline constraints
in communication networks has been attracting research interest.  This
type of problem distinguishes from conventional multicommodity flows
due to the presence of the time dimension.  In this paper, we address
a flow routing and scheduling problem with delivery deadline,
where the assignment of link capacity occurs in discrete units.
Discrete capacity allocation is motivated by applications in
communication systems, where it is common to have a base unit of
capacity (e.g., wavelength channel in optical communications).  We
present and prove complexity results of the problem. Next, we give an
optimization formulation based on a time slicing approach (TSA), which
amounts to a discretization of the time into time slices to enable to
formulate the deadline constraints. We then derive an
effective reformulation of the problem, via which a column generation
algorithm (CGA) is developed. In addition, we propose a simple and
fast Max-Flow based Algorithm (MFA). We use a number of network and
traffic scenarios to study various performance aspects of the
algorithms.

\end{abstract}
\begin{IEEEkeywords}
deadline, discrete capacity allocation, flow routing, flow scheduling, networks
\end{IEEEkeywords}

\section{Introduction}

Joint optimization of routing and scheduling of data flows across networks
is of importance to many applications, such
as data exchange in scientific projects~\cite{bandwidthprovision11},
data replication between datacenters~\cite{JamesHamiltonBlog}, as well as
reducing carbon footprint~\cite{noteasybeinggreen} and electrical expenses
~\cite{Qureshi2009} of datacenter networking.
A data flow may be subject to a
delivery deadline~\cite{Wilson2011}. For example, a flow may be a data backup or a
collected information from weather stations. The former is less time
sensitive and may happen in the background, while the latter is more
time sensitive and may be subject to a
deadline~\cite{ImprovedScheduling}. In fact, a survey of customers by
Microsoft~\cite{Jalaparti2016} reveals that most of them desire a
deadline guarantee for data delivery, and timely delivery carries
economic incentives as well~\cite{ji2018}.

We consider a flow routing and scheduling problem, where each flow is
characterized by a source, a destination, and amount of data (in bits)
to be delivered. The capacity allocated to a flow on a link
corresponds to the data transmission rate (in bits per second). A flow
may be subject to a deadline, before which the entire amount of demand
has to reach the destination. The objective is to minimize the overall
completion time. We consider, what was not studied extensively earlier,
that the capacity of a link can be allocated to flows only
in the form of discrete units; each unit is allocated to at most one
flow. This is rather common in communication networks, where
capacity allocation among flows has limitation in granularity. For example, in
optical networks the capacity of a link is often defined by the number
of transmission channels (wavelengths), and a wavelength channel is
not shared among flows.

The key difference between our problem setting and classic flow
problems (in particular multicommodity flow,
e.g.,\cite{larsson2004,Minoux2006,ahuja1993}) is the time
component and scheduling aspect because of the presence of deadline.
Some works (e.g.,\cite{ImprovedScheduling}, \cite{schedulingdeadline},
\cite{AdvanceReservations2009,multiplebulkdata,Liudong2015,GuaranteeingDeadlines2017,noor2017}),
which we will review in more details in
Section~\ref{sec:related_works}, have addressed the time aspect,
though not for discrete capacity allocation. The combination of
deadline constraint and discrete capacity allocation imposes challenges.

There are several objectives of our study. First, we would like to
examine if scheduling with deadline impacts problem complexity.
Second, our study targets developing mathematical formulations and
understanding to what extend they enable problem solution.  Another
key objective is to derive efficient solution algorithms in terms of
optimality and time efficiency. Finally, we aim at using computational
study to shed light on how well algorithms perform for application
scenarios of interest.

Our main findings, with respect to the above objectives, are as follows.

\begin{itemize}

\item As we prove later, the problem is NP-hard, not only because
of routing with discrete capacity allocation, but also inherently due
to the time scheduling aspect. Namely, the problem is NP-hard even if
the routing is fully fixed.

\item A linear integer formulation can be derived using time slicing, i.e.,
the time is partitioned into slices (aka time slots). If there are
too few slices, however the formulation does not represent exactly the
original problem, hence the solution is sub-optimal in general and the
formulation may be infeasible even if feasible solutions exist to the
original problem. Using many slices, on the other hand, increases the
problem size.

\item We derive an alternative formulation, in which the
possible routing patterns (with associated capacity allocations) are
the key components, and time durations of using these patterns are the
optimization variables.  As there are exponentially many routing
patterns, we develop a column generation algorithm. Applying the
algorithm until its termination gives the optimal solution.

\item We also consider an intuitive and fast heuristic algorithm
based on maximum flows. The algorithm uses the deadlines to set
priorities of the flows.

\item Via extensive numerical results, it is observed that the column generation
algorithm outperforms time slicing approach. The max-flow based
algorithm, although fast, has issues with optimality and feasibility.
However, we found that column generation can be combined with
the heuristic as well as with the time slicing model, to gain
significant reduction of computing effort.

\end{itemize}

The remainder of the paper is organized as follows. In
Section~\ref{sec:related_works}, we review the related works. In
Section~\ref{sec:problem_definition}, we provide problem definition,
illustrative examples, as well as complexity analysis for two cases.
In the first case, both routing and scheduling are subject to
optimization. In the second one, routing is fixed and only scheduling
is present.  In Section~\ref{sec:timeSlicing}, we formulate the
problem using the time-slicing approach, and highlight potential
drawbacks of this approach.  Next, we reformulate the problem using
routing patterns, and present our column generation algorithm in
Section~\ref{sec:cga}.  The heuristic algorithm is then presented in
Section~\ref{sec:ha}.  Performance evaluation is given in
Section~\ref{sec:performance_evaluation} followed by conclusions in
Section \ref{sec:concolusion}.

\section{Related works}
\label{sec:related_works}

Existing studies of scheduling flows in networks with deadlines can be
categorized into two groups. The first group is based on heuristic
algorithms that offer sub-optimal solutions.  For example, fair
sharing that divides the link capacity equally among flows, is known to
be far from optimal in terms of minimizing the overall completion
time~\cite{Bansal2001} and meeting the flow
deadlines~\cite{Sivaraman2001}.  Scheduling flows with respect to
their deadlines is known to minimize the number of late flows, whereas
sending flows with smaller sizes first minimizes the mean flow
completion time~\cite{FinishingFlows}.  The authors of
\cite{Fatiha2010} proposed an algorithm based on solving
a sequence of maximum flow problems.
As it is apparent, none
of such algorithms guarantees obtaining the optimal solution and they
may fail in obtaining a feasible solution, even though they generally
run fast. The second group of algorithms uses the idea of discretization of time
into slices. The problem in question is divided into several
interconnected sub-problems, where each sub-problem deals with one
time slice in which some amount of data is delivered.

Chen and Primet \cite{schedulingdeadline} investigated multiple bulk
data transfer (MBDT) subject to deadline constraints with given
routing paths.  The problem is formulated as bandwidth allocation over
time to minimize the network congestion factor. Within every time
slice, a bandwidth is reserved for each data transfer.  Rajah
et~al.~\cite{rajah2008} studied the problem of dynamic MBDT with
deadlines, and proposed a scheduling framework based on dividing time
into uniform slices. The problem is then formulated as a maximum
concurrent flow problem, using throughput as the performance
objective. Extending this work, in \cite{AdvanceReservations2009} the
authors proposed a non-uniform time slicing method and introduced
admission control along with scheduling algorithms to minimize the
request rejection ratio. Wang et~al.~\cite{multiplebulkdata} studied
the problem of MBDT with deadlines and time-varying link
capacity. Assuming the presence of delay tolerance, the problem
amounts to reducing the peak traffic load on links along the temporal
dimension via store-and-forward. Time slicing is used for problem
formulation. In \cite{Yao2015}, the authors studied MBDT in
inter-datacenter networks for backups and recovery purposes in natural
disasters. In this work, the destination (backup site) of a data flow
is unknown and is subject to optimization. The objective is to
minimize the time for the backup process. In \cite{Yassine2016}, the
authors investigated MBDT with soft and hard deadlines. They proposed
a bandwidth on demand broker model.  The works in \cite{Yao2015} and
\cite{Yassine2016} considered discrete capacity allocation using
an optical wavelength channel as the base unit. Zuo and Zho~\cite{ImprovedScheduling} studied the problem of
scheduling multiple bandwidth reservation requests on one reservation
path. Two performance parameters, the completion time and the duration
of scheduling individual flows, were considered.  The authors
of~\cite{ImprovedScheduling} proved that both problems are
NP-complete. The authors also presented improvements of two heuristic
algorithms previously given in \cite{Liudong2015}. The authors of \cite{GuaranteeingDeadlines2017}
studied dynamic inter-datacenter data transfers with guaranteed
deadlines. They used time slots to model the timeline, and proposed two methods
to determine whether or not a new coming flow can be
accommodated. The work in \cite{noor2017} considered transferring
data from one source to multiple destinations, which is related
to the Steiner tree problem.

As was mentioned earlier, the two key elements in the problem we study
are scheduling with deadline and integer capacity allocation. In
respect of these, the current literature presented above use either
heuristics or time slicing formulations to approach problem solutions.
To the best of our knowledge, there is a lack of studies investigating
problem complexity inherently connected to the scheduling element, and
alternative formulations than enable efficient computing of optimum.
These represent the literature gap that the current paper intends to
fill.

\section{Problem Definition and Complexity Analysis}
\label{sec:problem_definition}

\subsection{Problem Definition}

A network is modeled by a directed graph
$\mathcal{G}=(\mathcal{N},\mathcal{A})$ where $\mathcal{N}$ is the set
of $N$ nodes and $\mathcal{A}$ is the set of $A$ arcs.  The arc from
node $i$ to node $j$ is represented by $(i,j)$. The capacity of arc
$(i,j)\in\mathcal{A}$ is denoted by $c_{ij}$.  Denote by
$\mathcal{U}=\{u_1,u_2,\dots,u_{k}\}$ the available set of capacity
units. For each arc and flow, the amount of the arc capacity allocated
is restricted to be a nonnegative integer combination of the elements
of $\U$. For arc $(i,j)$, the total allocated capacity to the flows
may not exceed $c_{ij}$. The allocated capacity to a flow represents
the maximum rate at which the data of this flow can be transmitted on
the link.  The set of flows is denoted by $\mathcal{F} = \{1, \dots,
F\}$. Each flow $f$, $f \in\mathcal{F}$, requires an amount of data to be sent from an origin
node to a destination node. A flow may have a deadline before which
its entire amount of data must be delivered to the destination. A flow
$f$ is specified by a 4-tuple $(o_f, d_f, t_f, s_f)$ where $o_f$,
$d_f$, $t_f$, and $s_f$ denote the origin, destination, deadline, and
size (i.e., amount of data), respectively. For
convenience, in Appendix~\ref{sec:notation}, we provide a summary of
the notion used for problem definition as well as those used in the
algorithms in later sections.

We say a flow is scheduled, if some positive amount of rate for this
flow, from the origin to the destination (hereafter referred to as
end-to-end rate), is achieved by capacity allocation along one or
multiple paths.  In general, the end-to-end rate of a flow changes over time and this rate may be zero for some time periods (in
which some other flows are scheduled), after which the flow is
scheduled again to deliver more
data~\cite{AdvanceReservations2009,multiplebulkdata}.  In other words,
a flow may be scheduled in multiple but not necessarily consecutive
time periods.  In each time period, some portion of data is sent and
its amount depends on the duration of time period and the rate at
which flow is sent.  Moreover, a flow may be routed along multiple
paths with different rates.  Path selection and capacity allocation
generally differ from one time period to another. Therefore, the
duration of time period, the capacity allocation, as well as route
selection are to be jointly optimized. That capacity allocation has
to be some integral combinations of the available capacity units
restricts the flow solution to be what is commonly known as integer
flows.  Hence we call our problem the Integer Flow with Deadline
Problem (IFDP), in which routing and scheduling are strongly
intertwined.  The objective is to complete all the flows (aka makespan
in some other context), subject to capacity and deadline constraints.
We remark that, if a flow is scheduled in some non-consecutive time
periods, the intermediate period will always be utilized for other
flows, because our objective function is to minimize the
makespan. That is, there will not be any idle time period in the
optimal schedule.

\subsection{Illustrative Examples}

We provide an example of IFDP in
Figure~\ref{fig:exmapleDrawbackTSA}, for a simple network of three
nodes and three arcs.  All arcs have unit capacity. Each flow has one
possible path. It is apparent that, at any time, only one flow can be
scheduled due to discrete capacity allocation. It is easy to see
that the optimal schedule begins with flow A for 0.5 time units,
followed by flow B for 1.5 time units, and finally flow C for 1 time
unit. The overall completion time is 3. Note that if continuous
capacity allocation is allowed, the corresponding optimum would route
flows A, B, and C simultaneously, each with 0.5 units of capacity, for
1 time unit. Next, flows B and C are combined again using 0.5 capacity units
each, for a duration of 1 time unit. Flow C is then scheduled alone
for 0.5 time units. The completion time is 2.5.

\begin{figure}[h!]
  \begin{subfigure}[b]{0.4\textwidth} \centering
  \includegraphics[scale=0.5]{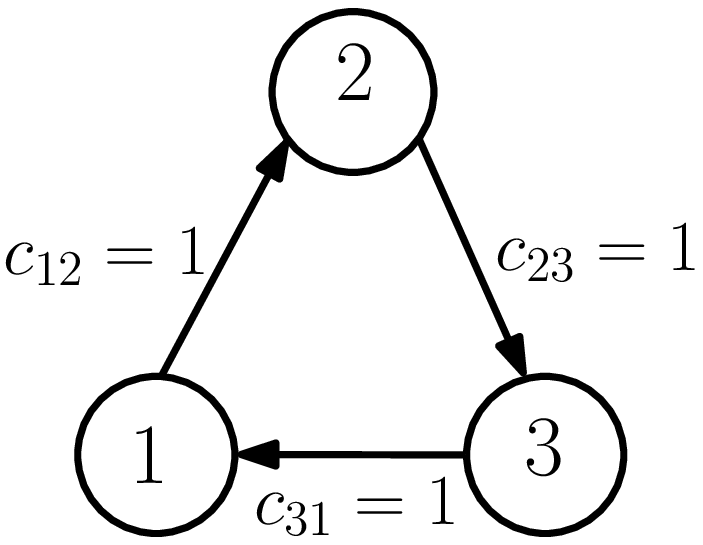}
\caption{Network topology.}
  \end{subfigure}
\begin{subfigure}[b]{0.5\textwidth}
\small
\centering
\begin{tabular}{ c c c c c }
\hline
\hline
Flow& Origin& Destination& Size & Deadline \\
\hline
\hline
A & 1 & 3 &0.5&1\\

B & 2 & 1 &1.5&2\\

C & 3 & 2&1&3 \\
\hline
\end{tabular}
\caption{Flow parameters.}
\end{subfigure}
  \caption{An example with 3 flows on a network with link capacity one.}
  \label{fig:exmapleDrawbackTSA}
\end{figure}

As can be seen from the example and its solution, the routing aspect
of IFDP clearly assembles integer flows applied to
telecommunication networks~(e.g., \cite{Minoux2006,BrCoFi2000}). On the
other hand, the scheduling aspect that takes place along the time
dimension, and the presence of deadlines make IFDP different from classical
flow problems.

\begin{figure}[h!]
  \begin{subfigure}[b]{0.4\textwidth} \centering
  \includegraphics[scale=0.5]{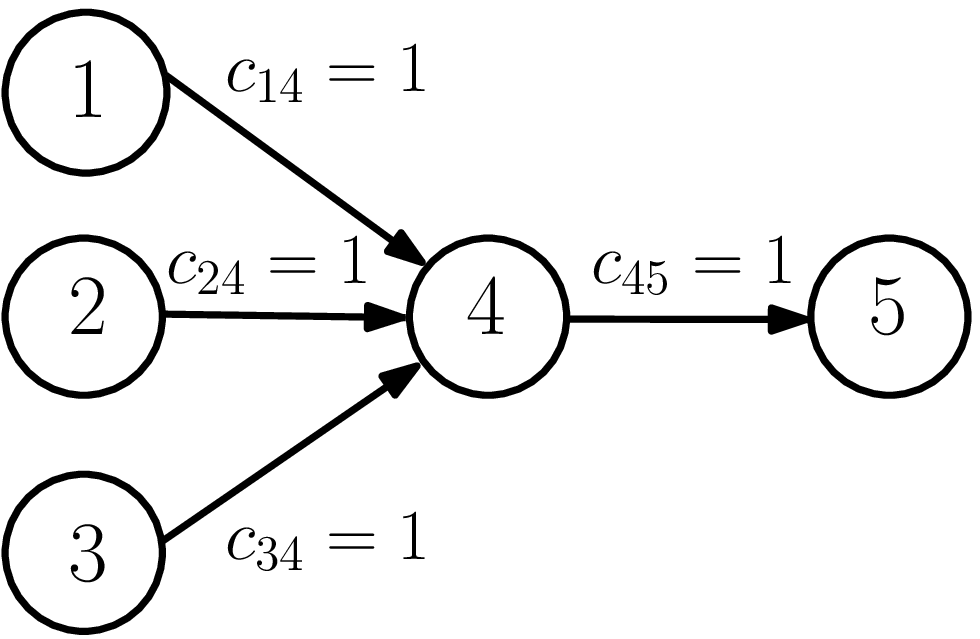}
\caption{Network topology.}
  \end{subfigure}
\begin{subfigure}[b]{0.5\textwidth}
\small
\centering
\begin{tabular}{ c c c c c }
\hline
\hline
Flow& Origin& Destination& Size & Deadline \\
\hline
\hline
A & 1 & 5 &1&1\\

B & 2 & 5 &1&2\\

C & 3 & 5 &1&3 \\

D & 2 & 4 &2&3 \\

\hline
\end{tabular}
\caption{Flow parameters.}
\end{subfigure}
  \caption{An example for which a flow
is not scheduled in consecutive time periods at optimum.}
  \label{fig:example2}
\end{figure}

We give a second example in Figure~\ref{fig:example2} to illustrate
non-consecutive scheduling of a flow along time.  As for the previous example,
all arcs have unit capacity. Consider first flows A, B, and
C. Because they share a common arc, the only feasible solution to meet
their deadlines is to schedule A, B, and C, for the given order using
one time unit each. The overall completion time is 3.
Now consider flow D, with size 2 and deadline 3. The path of flow D
is in conflict with that of flow B. However, flow D can be scheduled first
together with flow A in the first time unit, and then again with flow C
in the third time unit, meeting the deadline of flow D and the overall completion time
remains 3.

\subsection{Complexity Analysis}
\label{NPHard}

IFDP consists of jointly considering flow routing and scheduling.
Because multicommodity flow with integer capacity allocation is
NP-hard, even for the case of single size of capacity unit~\cite{Even74}, the
NP-hardness of IFDP is expected.  We formalize this result for the sake
of completeness.

\begin{theorem}
\label{th:NP1}
IFDP is NP-hard.
\end{theorem}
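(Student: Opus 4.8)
The plan is to establish NP-hardness by a polynomial-time reduction from a known NP-hard problem. Since the excerpt already cites that multicommodity flow with integer capacity allocation is NP-hard even for a single capacity-unit size~\cite{Even74}, the natural strategy is to reduce that problem to IFDP. The reduction should exhibit IFDP as a generalization of integer multicommodity flow, so that a polynomial-time algorithm for IFDP would immediately solve the integer multicommodity flow problem.

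First I would recall the decision version of integer multicommodity flow: given a directed graph with integer arc capacities, a set of commodities each with an origin, destination, and integer demand, does there exist a feasible integer flow routing all commodities simultaneously without exceeding capacities? Given such an instance, I would construct an IFDP instance on the same graph $\mathcal{G}=(\mathcal{N},\mathcal{A})$ with the same arc capacities, taking the single capacity unit $\mathcal{U}=\{1\}$. Each commodity becomes a flow $f=(o_f,d_f,t_f,s_f)$ where $o_f,d_f$ are the commodity's endpoints and $s_f$ equals its demand. The key is to choose the deadlines $t_f$ and interpret the makespan objective so that a ``good'' IFDP schedule corresponds exactly to a feasible simultaneous integer flow. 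A clean choice is to set every deadline $t_f=1$ and ask whether IFDP admits a schedule with makespan at most $1$: completing every flow within one time unit, using integer (here, unit-multiple) capacity allocation and respecting arc capacities at every instant, is possible if and only if all demands can be routed simultaneously as an integer multicommodity flow of value equal to the demands. Thus a makespan of $1$ is achievable precisely when the multicommodity flow instance is feasible.

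The remaining step is to argue the equivalence in both directions and confirm the reduction is polynomial. In the forward direction, a feasible simultaneous integer flow gives a static routing pattern that, held for one time unit, delivers each $s_f$ by its deadline, so the IFDP makespan is at most $1$. In the reverse direction, I would argue that if IFDP achieves makespan at most $1$, then at each instant the instantaneous rates form a feasible fractional/integer flow, and by averaging (or by the store-and-forward structure over a single unit interval) one recovers a feasible integer multicommodity flow routing all demands. Because the graph, capacities, commodities, and the single deadline are copied or set in constant time per element, the construction is clearly polynomial in the input size.

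The main obstacle I anticipate is making the reverse direction rigorous, specifically tying the \emph{temporal} feasibility of IFDP back to a \emph{static} integer multicommodity flow. Over a single time unit the total data delivered for flow $f$ is the time-integral of its instantaneous end-to-end rate, and the integrality of capacity allocation is imposed on the units used on each arc; I must ensure that a schedule meeting all demands in unit time can be converted into a single integer flow rather than merely a time-averaged fractional flow. The discrete-capacity constraint $\mathcal{U}=\{1\}$ is what forces integrality and keeps the correspondence exact, so care is needed to verify that the reduction preserves integrality in both directions and that no advantage is gained from splitting a flow across multiple sub-intervals within the single unit of time. Once this temporal-to-static correspondence is nailed down, NP-hardness of IFDP follows immediately from that of integer multicommodity flow.
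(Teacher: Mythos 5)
There is a genuine gap in the reverse direction, and it is not merely a matter of tightening the argument: the reduction as stated is in real danger of being wrong. IFDP allows the routing/allocation pattern to change over the unit interval, so ``makespan $\le 1$'' is equivalent to writing the demand vector $s$ as $\sum_i \delta_i v_i$ with $\sum_i \delta_i \le 1$, where each $v_i$ is an end-to-end rate vector achievable by \emph{some} integral allocation. This time-sharing region is the scaled convex hull of the integral rate vectors, which strictly contains the set of demand vectors routable by a \emph{single} static integer multicommodity flow. Your proposed fix --- ``by averaging \dots one recovers a feasible integer multicommodity flow'' --- fails precisely here: averaging integral patterns yields only a fractional flow, and the hardness of integer MCF lives exactly on instances where the fractional relaxation is feasible but no integer flow exists. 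On such instances (e.g.\ the half-integral solutions typical of the Even--Itai--Shamir gadgets), the fractional optimum can often be realized by time-sharing two integral patterns for half a time unit each, so the IFDP instance would answer ``yes'' while the integer MCF instance answers ``no,'' breaking the equivalence.

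The paper avoids this by reducing directly from 3-SAT and, crucially, by constructing the instance so that every flow's demand equals its individual maximum end-to-end rate (all capacities, unit sizes, and deadlines are $1$, and every flow is bottlenecked at rate $1$). This forces every flow to run at full rate at \emph{every} instant, so any time snapshot of a feasible schedule is already a static integral routing, and the temporal-to-static correspondence you worry about becomes immediate. If you want to keep your reduction-from-integer-MCF strategy, you would need an analogous saturation device --- guaranteeing that each commodity's demand cannot be met unless it is routed at its maximum integral rate throughout the interval --- and you would have to verify that the source problem remains NP-hard under that restriction; without it, the reduction does not go through.
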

\begin{proof}
See Appendix~\ref{NPhard1}.
\end{proof}

A more interesting aspect of problem complexity arises when the path
selection is fixed. Note that multicommodity flow with integer capacity
allocation is no longer NP-hard, if one fixed path is given for each
commodity, as the problem will reduce to a simple feasibility check.
In the following, we show the fact that the complexity of IFDP is not
only due to integer capacity allocation, but also inherently connected
to scheduling along the time dimension. Namely, the problem remains
NP-hard, even if the paths of the flows are fully fixed.

\begin{theorem}\label{th:NP2}
IFDP with given paths is NP-hard.
\end{theorem}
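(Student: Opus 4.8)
The plan is to prove NP-hardness of IFDP with fixed paths by reduction from a known strongly NP-hard scheduling problem. The natural candidate is a makespan-minimization scheduling problem with release times, deadlines, and preemption constraints, or more directly the problem of scheduling unit/arbitrary-length jobs on parallel machines subject to deadlines. I would reduce from a classical NP-complete problem such as \textsc{3-Partition} (which is strongly NP-hard, so that even with a polynomial encoding the reduction is valid), or alternatively from the single-machine scheduling problem $1\mid r_j\mid L_{\max}$ variant. The key intuition is that once the routing paths are fixed, the conflict structure among flows is determined: two flows conflict precisely when their fixed paths share at least one arc whose discrete capacity is insufficient to carry both simultaneously. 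Thus the scheduling problem reduces to time-sharing a conflict resource among the flows while respecting per-flow deadlines and minimizing makespan.

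Concretely, I would first construct a gadget network in which each flow's fixed path forces it to compete for a single shared bottleneck arc of unit capacity, so that at any instant only a restricted set of flows can be simultaneously active. The construction should translate an instance of \textsc{3-Partition} --- with $3m$ integers summing to $mB$ and a target $B$ --- into $3m$ flows sharing one or a few unit-capacity arcs, with sizes encoding the integers and with a common deadline of $m$ (after rescaling). Because discrete capacity allocation forbids fractional sharing of the bottleneck, the schedule must partition the active flows over time into groups, and the deadline constraint forces each group to fit within a time window of length $B$. Meeting the deadline with makespan exactly $m$ is then possible if and only if the integers can be partitioned into $m$ triples each summing to $B$. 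I would then argue the two directions: a valid \textsc{3-Partition} yields a feasible schedule of makespan $m$, and conversely any schedule achieving makespan $m$ induces a valid partition.

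The main obstacle, and the step requiring the most care, will be engineering the network gadget so that the \emph{discreteness} of capacity allocation --- not merely a generic scheduling constraint --- is what drives the hardness, while simultaneously keeping the paths genuinely fixed. In particular I must ensure that no clever multi-path or partial-rate trick circumvents the intended conflict structure, which means the bottleneck arc's capacity and the unit set $\mathcal{U}$ must be chosen so that each flow consumes exactly one indivisible unit and the number of simultaneously schedulable flows is tightly controlled. I would also need to verify that the deadline values and flow sizes can be encoded in size polynomial in the \textsc{3-Partition} instance, which is where the strong NP-hardness of \textsc{3-Partition} is essential: it permits the numbers to appear in unary-comparable magnitudes without the reduction blowing up. Finally, I would confirm that the ``no idle period at optimum'' property noted earlier in the paper is consistent with the gadget, so that the makespan objective faithfully measures the packing quality, closing the equivalence between feasible schedules and valid partitions.
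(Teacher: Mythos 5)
There is a genuine gap, and it lies exactly where you flagged the ``main obstacle'': the gadget you sketch cannot be made to work. If all $3m$ flows are forced through a single unit-capacity bottleneck arc (or a small clique of such arcs), then at every instant at most one flow is active, and the problem collapses to preemptive single-machine scheduling with deadlines. Since IFDP explicitly allows a flow to be scheduled in non-consecutive time periods, this is solvable in polynomial time by earliest-deadline-first --- indeed the paper's own Theorem~\ref{th:Poly} states precisely that the single-shared-bottleneck case is easy. With a \emph{common} deadline the instance is even more degenerate: it is feasible if and only if the sum of the sizes does not exceed the deadline, and any ordering works. The combinatorial content of \textsc{3-Partition} is an \emph{assignment} decision (jobs to machines, or jobs to rigid time windows), and both escape routes are closed here: parallel ``machine'' arcs would reintroduce a routing choice, contradicting the fixed-path hypothesis, while time windows demarcated by staggered deadlines do not constrain a preemptive schedule beyond what EDF already handles. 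Consequently your converse direction --- ``any schedule achieving makespan $m$ induces a valid partition'' --- fails: a feasible preemptive schedule need not respect any grouping into triples.

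What the hardness actually requires, and what the paper's proof supplies, is a \emph{nontrivial conflict structure} among flows with fixed paths: the paper reduces from 3-SAT, giving each complementary pair of literal flows a private unit-capacity bottleneck (so exactly one of $n_i,\hat{n}_i$ is active at any instant, encoding a truth assignment), routing the flow for $n_i$ through one capacity unit of the first arc (capacity $3$) of every clause path whose clause contains $\hat{n}_i$, and giving each clause flow deadline $1$ so that it must run at full rate throughout $[0,1]$ --- which is possible iff at least one of the three ``blocking'' literal flows is inactive, i.e., the clause is satisfied. If you want to keep a scheduling-style reduction, you would need a source problem whose hardness survives preemption and fixed resource assignments; \textsc{3-Partition} with a single shared resource is not such a problem.
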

\begin{proof}
See Appendix~\ref{NPhard2}.
\end{proof}

We end the section by a special case, for which IFDP can be solved in
polynomial-time. Namely, there is exactly one bottleneck link on which
all flows interact in terms of capacity sharing, which for example
appears in some application scenarios~\cite{ImprovedScheduling}. Note
that the paths to be used by the flows are not given, however all
candidate paths include the bottleneck link.
In this case, IFDP reduces to a
sequence of maximum flow problems.

\begin{theorem}\label{th:Poly}
If the possible paths of flows all share exactly one common link, of which the
capacity is non-redundant for any flow (i.e., the maximum end-to-end rate of
the flow equals the capacity, even if the other flows are discarded),
then the optimum of IFDP is to route and schedule the flows separately
in time, in ascending order of the deadlines.
\end{theorem}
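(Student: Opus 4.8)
The plan is to reduce this restricted IFDP to the classical single-machine scheduling problem of minimizing maximum lateness (for which the earliest-deadline-first, or Jackson's, rule is optimal) and to verify that the discrete capacity constraint is compatible with the resulting schedule. Let $c^{*}$ denote the capacity of the single common link $e$, and reindex the flows so that $t_1 \le t_2 \le \cdots \le t_F$. First I would establish a makespan lower bound: since every candidate path of every flow traverses $e$, all $\sum_{f\in\F} s_f$ units of data must cross $e$, while at any instant the aggregate rate on $e$ is at most $c^{*}$. Integrating the rate bound over $[0,T]$ shows that a schedule of makespan $T$ delivers at most $c^{*}T$ units, so any feasible schedule satisfies $T \ge \sum_{f\in\F} s_f / c^{*}$.

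Next I would show this bound is attained by the sequential schedule and set up the single-machine correspondence. By the non-redundancy hypothesis, each flow alone admits an end-to-end rate equal to $c^{*}$, achieved by an integer flow; since only one flow is routed at a time it may saturate $e$ at rate $c^{*}$ with a legitimate discrete allocation. Hence serving the flows one at a time, back to back, completes flow $f$ in time $s_f/c^{*}$ and yields makespan exactly $\sum_{f\in\F} s_f/c^{*}$, meeting the lower bound. To handle deadlines I would formalize the equivalence with the single-machine instance having processing times $p_f = s_f/c^{*}$ and due dates $t_f$: normalizing the rate of each flow across $e$ by $c^{*}$ turns any IFDP schedule into a preemptive (processor-sharing) single-machine schedule with identical completion times, and conversely the earliest-deadline-first sequential schedule lifts back to a valid discrete IFDP schedule by the routing argument just described.

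Finally I would argue that processing the flows in non-decreasing deadline order meets every deadline whenever the instance is feasible. This is the standard exchange argument for minimizing $L_{\max}$ on a single machine: because all flows are available from time zero (no release dates), neither preemption nor overlap can help, and swapping any adjacent out-of-order pair in a sequential schedule never increases the maximum lateness, so if any schedule respects all deadlines, the earliest-deadline-first order does as well. Combining the three steps, the earliest-deadline-first sequential schedule is feasible and attains the makespan lower bound, hence is optimal; as a by-product it shows the instance is feasible if and only if this schedule meets all deadlines.

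I expect the main obstacle to be making the two-way reduction to the single-machine model fully rigorous, namely arguing that the extra freedom of IFDP (splitting a flow over several paths, time-varying rates, and serving several flows simultaneously across $e$) can never beat the sequential schedule, for both objectives at once. The makespan direction is disposed of cleanly by the bottleneck lower bound, but the deadline direction requires the exchange argument to dominate the full class of overlapping and preemptive schedules, combined with the discrete-realizability check confirming that the sequential lift is a genuine integer flow under $\U$.
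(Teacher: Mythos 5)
Your argument is correct, but it is organized quite differently from the paper's. The paper works directly on an arbitrary optimal IFDP solution and performs two local transformations: first, any time interval of length $\delta$ in which several flows share the bottleneck arc with rates $\mu_f$ is serialized into consecutive sub-intervals of lengths $\delta\mu_f/c$ during which each flow alone receives the aggregate rate $c=\sum_f\mu_f$ (feasible by the non-redundancy hypothesis), preserving both the per-flow delivered amounts and the total elapsed time; second, adjacent flows that violate the deadline order are swapped, which cannot hurt feasibility since the earlier-deadline flow only finishes sooner and the later-deadline flow inherits a completion time already known to respect the smaller deadline. You instead prove a global bottleneck bound $T\ge\sum_{f}s_f/c^{*}$, observe that the sequential earliest-deadline-first schedule attains it, and delegate deadline feasibility to the classical single-machine $L_{\max}$ result (equivalently, the prefix-load condition $\sum_{i\le j}s_i\le c^{*}t_j$, which any overlapping or preemptive schedule must satisfy and which the EDF sequence saturates). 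Your route buys an explicit closed-form optimal makespan and a clean feasibility characterization as by-products, and it dispatches the ``can splitting, time-varying rates, or overlap ever help'' worry in one stroke via the integral bound; the paper's exchange argument is more self-contained and does not need to invoke any external scheduling theory, but proves only the existence of an optimal solution of the stated form. The one point you flag as an obstacle --- dominating the full class of overlapping schedules in the deadline direction --- is in fact already resolved by applying your own bottleneck integral to each deadline-prefix of flows, so no further work is needed there.
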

\begin{proof}
See Appendix~\ref{Poly}.
\end{proof}

\section{Mathematical Formulation using Time-slicing}
\label{sec:timeSlicing}

\subsection{Problem formulation}

Mathematically formulating IFDP in not very obvious.  One
possibility of representing the problem is the Time-Slicing based
Approach (TSA).  In TSA, time is divided into time slices, of which
the set is denoted by $\mathcal{T}$, and the length of time slice $\tau$
is denoted by $|\tau|$. Each time slice is
associated with a capacity allocation. This allocation is used
throughout the time
slice. Hence this is an approximate formulation, and the accuracy
depends on the granularity of time slicing. For each time slice, the
problem is similar to regular multicommodity flows.  The time slices
are then considered jointly with respect to data demands and
deadlines. The concept of time slicing has been used for modeling
and solving problems related to IFDP, see \cite{schedulingdeadline,
AdvanceReservations2009,multiplebulkdata,rajah2008}. The TSA
we present below is an adaption of the concept to our problem.

We use variable $y^{\tau}_{fij}$ to denote the rate allocated to flow
$f$ on arc $(i,j)$ in time slice $\tau$.  Variable $r_f^{\tau}$ is
used to represent the total end-to-end rate of flow $f$ in time slice
$\tau$. Consequently, $|\tau|r_f^{\tau}$ is the amount of data of flow
$f$ that is delivered in this time slice.  Let $z_{fij}^{m,\tau} \in
\mathbb{Z}^+$ be a non-negative integer variable, denoting how many
times capacity unit $m$ is allocated to $f$ on $(i,j)$ in
$\tau$. Finally we use binary variable $w_\tau$ to denote if there is
any flow with positive rate in time slice $\tau$. The TSA formulation
is given below.

\begin{subequations}
\begin{alignat}{2}
\quad &
\min~~\sum_{\tau \in \mathcal{T}} |\tau| w_\tau \label{obj:TSA} \\
\text{s.t}. \quad
&
\sum_{\{j|(i,j) \in \mathcal{A}\}}y^{\tau}_{fij} - \sum_{\{j|(j,i) \in \mathcal{A}\}}y^{\tau}_{fji}=
\begin{cases}
  -r_f^{\tau},                & \text{if } i=o_f\\
    r_f^{\tau},              & \text{if } i=d_f,~\forall f \in \mathcal{F}, \forall i \in \mathcal{N}, \forall \tau \in \mathcal{T}\\
    0,                & \text{otherwise}
\end{cases}
\label{const:flowbalanceTSA}\\
&
\sum_{\{\tau \in \mathcal{T}:\tau \le t_f\}}  |\tau| r^{\tau}_f  = s_{f},\forall f \in \mathcal{F}
\label{const:sizeTSA}\\
&
 y^{\tau}_{fij} \leq \sum_{m=1}^{k} u_{m}
z_{fij}^{m,\tau},\forall f \in \mathcal{F}, \forall (i,j) \in
\mathcal{A}, \forall \tau \in \mathcal{T} \label{const:capacity1TSA}\\
&
\sum_{f \in F}\sum_{m=1}^{k}  u_{m} z_{fij}^{m,\tau} \le c_{ij}w_{\tau},
\forall (i,j) \in \mathcal{A}, \forall \tau \in \mathcal{T}  \label{const:capacity2TSA} \\
&
|\tau| r_{f}^\tau \le s_f w_\tau,\forall f \in \mathcal{F}, \tau \in \mathcal{T} \label{const:maxdataTSA} \\
&
w_{\tau} \leq w_{\tau-1},\forall \tau \in \mathcal{T}: \tau \geq 2 \label{const:Removing_symmetryTSA} \\
&
w_\tau \in \{0, 1\}, \forall \tau \in \mathcal{T} \\
&
y^{\tau}_{fij} \geq 0, \forall f \in \mathcal{F}, \forall (i,j) \in
\mathcal{A}, \forall \tau \in \mathcal{T} \\
&
z_{fij}^{m,\tau} \geq 0, \text{integer}, \forall f \in \mathcal{F}, \forall (i,j) \in
\mathcal{A}, \forall \tau \in \mathcal{T}, \forall m \in \{1,\dots,k\} \\
&
r_f^\tau \geq 0,  \forall f \in \mathcal{F}, \forall \tau \in \mathcal{T}
\end{alignat}
\label{eq:tsa}
\vskip -20pt
\end{subequations}

The objective function \eqref{obj:TSA} states the minimization of the total number
of used time slices.  The flow conservation constraints are formulated
in \eqref{const:flowbalanceTSA}.  The next set of constraints
\eqref{const:sizeTSA} state that the entire amount of data of a flow
has to be delivered in the time slices before the deadline. Inequality
\eqref{const:capacity1TSA} bounds the arc rate to what is permitted by
the values of $z$-variables and the right-hand side is a non-negative
integer combination of the capacity units.  Note that the inequality
allows the rate of flow on link to be smaller than the allocated
capacity. This is because we have formulated
\eqref{const:sizeTSA} using equality.
The arc capacity constraint is given in \eqref{const:capacity2TSA}.
It should be remarked that the formulation remains valid without the
presence of $w_\tau$ in the right-hand side. However the inclusion of
$w_\tau$ is for the purpose of strengthening the linear programming
(LP) relaxation. In \eqref{const:maxdataTSA}, the $w$-variables are
linked to the end-to-end rate variables, such that $w_\tau = 1$ if
there is any positive rate of any flow in time slice $\tau$, otherwise
$w_\tau = 0$ due to minimization. By \eqref{const:Removing_symmetryTSA},
time slices have to be used consecutively, starting with the first slice.
This, together with \eqref{obj:TSA}, implies that the optimal solution
does lead to the minimum overall completion time.

\subsection{Remarks on TSA}

As TSA is an integer linear model, standard integer programming
solvers can be used for its solution. One difficulty of using TSA is
that it is not obvious how many time slices should be defined.
Clearly, by increasing the number of time slices, all feasible
solutions of IFDP are eventually feasible solutions of TSA as well,
and hence solving TSA leads to the optimum of IFDP. A large number of
slices, however, significantly increases the size of the TSA model.  On
the other hand, if too few slices are used, solving the TSA model may
give a sub-optimal solution, and it may happen that no feasible
solution can be found at all via the model, even though such solutions
exist to the original IFDP instance.

Consider again the example given in Figure~\ref{fig:exmapleDrawbackTSA}.
Suppose time is divided into three time slices, each having
$|\tau|=1$, i.e., $[0,1]$, $(1,2]$, and $(2,3]$.  It is easy to see that
there is no feasible solution to the TSA formulation.  With
continuous capacity, however, the problem instance is feasible. The
optimal solution is illustrated in Figure~\ref{fig:example1}(b),
consisting of scheduling $0.5$ unit of each flow in time slice
$[0,1]$, followed by scheduling $1$ unit of flow $B$ using full
capacity of arcs $(2,3)$ and $(3,1)$ in time slice $(1,2]$, and
finally $0.5$ unit of capacity of arcs $(3,1)$ and $(1,2)$ to flow $C$
in the last time slice $(2,3]$. The overall completion time is
$3$. Note however this value originates from the granularity in time slicing;
in fact, the completion time would be shorter if the last time slice can be
broken into smaller segments.

Next, suppose we use a higher granularity and set $|\tau|=0.5$, resulting
in six time slices: $[0,0.5]$, $(0.5,1]$, $\dots$, and $(2.5,3]$.  In
this case, feasible solutions exist also with discrete capacity
allocation.  The optimum with discrete allocation is to schedule flow
$A$ in the first time slice, flow $B$ in the next three time slices,
and flow $C$ in the last two time slices. The deadlines are met, and
the solution has a completion time of $3$, see
Figure~\ref{fig:example1}(c). With continuous capacity allocation, an
optimal solution is illustrated in Figure~\ref{fig:example1}(d),
consisting of scheduling $0.5$ unit of each flow in the first two time
slices, followed by delivering $1$ unit of flow $B$ using full
capacity of arcs $(2,3)$ and $(3,1)$ in the next two time slices, and
finally allocating full capacity of arcs $(3,1)$ and $(1,2)$ to flow
$C$ in the next time slice. This gives a completion time of $2.5$.

\begin{figure}[h!]
\captionsetup{justification=centering}
\includegraphics[scale=1]{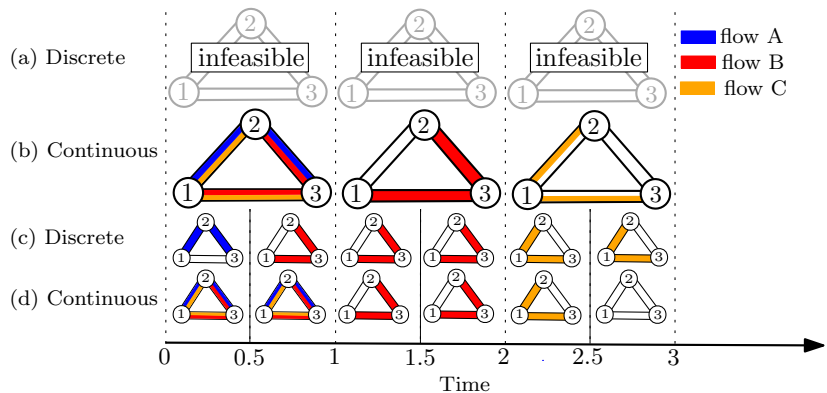}
\caption{Solutions with continuous and discrete capacity allocation.}
\label{fig:example1}
\end{figure}

From the above, if the number of time slices is too small, the
approach may fail even if there exist feasible solutions. In fact,
for the example network, using three time slices of uneven durations,
and, more specifically, $[0,0.5]$, $(0.5,2]$, and $(2,3]$
respectively, does lead to the feasibility as well as optimality.
However, in general there is no recipe for how to choose the slice
durations. Using a higher number of time slices will mitigate the
infeasibility issue, however at the cost of large problem size and
hence longer solution time. We also remark that, provided that the
original problem IFDP is feasible, one can easily prove that using
continuous capacity allocation leads always to a feasible solution
with any time slicing.  However it is rather hard to make use of the
solution to derive a solution to the original problem. Moreover, it
shall be remarked that, instead of using fixed time slices, one can treat
durations of the time slices as continuous variables. Doing so leads to a
nonlinear formulation, because the amount of data delivered
for a flow in a time slice is the product of the end-to-end rate,
which is a variable, and the duration of the slice.

Even though solving TSA may not be an effective approach to obtain a
high-quality solution for IFDP, formulation \eqref{eq:tsa} may be
useful for the purpose of bounding. In particular, consider a chosen
time length of $T$ and the LP relaxation of \eqref{eq:tsa} with one
single time slice of $[0, T]$. If the LP relaxation is infeasible,
then $T$ is a lower bound of the optimum of IFDP. The highest lower
bound can obtained via examining the largest $T$ (e.g., via bi-section
search) for which LP infeasibility remains. The bound can be used to
bound the optimality gap of a solution. Later in
Section~\ref{sec:performance_evaluation}, we study the effect of using
this bound together with a tolerance of optimality as a termination
criterion, on the solution time of our column generation method
presented in the next section.

\section{Problem reformulation and column generation}
\label{sec:cga}

We present a problem reformulation that enables a column generation
algorithm (CGA), that decomposes the problem into the so called master
problem (MP) and subproblem (SP). The algorithm iterates between a
restricted MP (RMP) and SP~\cite{viableCG}.  A key aspect in designing
a column generation algorithm is how to define the "column". In our
case, a column corresponds to an end-to-end rate vector of the flows,
resulted from a specific routing and capacity allocation solution.
More specifically, a rate vector is generically denoted by
$v=[r_1^v,r_2^v,\dots,r_F^v]^T$, in which element $r_f^v$, $f
\in \mathcal{F}$, represents the end-to-end rate at which flow $f$ is
sent from its origin to destination. The set of all possible vectors
is denoted by $\mathcal{V}$, of which the cardinality is clearly
finite.  For later use, we also define a subset of $\mathcal{V}$
denoted by $\mathcal{V}_f$ which consists of those vectors in
$\mathcal{V}$ for which the rate of flow $f$ is strictly
positive, i.e., $\mathcal{V}_f=\{v \in \mathcal{V} |r_f^v>0\}$. Note that
any feasible rate vector must satisfy flow conservation, flow rate, and
arc capacity constraints. These constraints will be utilized in the SP
for constructing new columns.

\subsection{MP and RMP}
For rate vector $v \in \V$, we define a continuous, non-negative
variable $x_v$ that represents how long time rate vector $v$ is used
in a problem solution. Also, without loss of generality, we assume that
the flow indices follow the ascending order of deadlines. The full MP
is as follows.

\begin{subequations}
\begin{alignat}{2}
\quad &
\min~~\sum_{v \in \mathcal{V}} x_v
\label{eq:mp} \\
\text{s.t}. \quad
&
\sum_{v \in \mathcal{V}_f} r_f^v ~ x_v = s_f,\forall f \in \mathcal{F} \label{const:sizeCGA}\\
&
\sum\limits_{v \in \mathcal{V}_1 \cup \dots \cup  \mathcal{V}_f }x_v  \leq t_f,\forall f \in \mathcal{F} \label{const:deadlineCGA}\\
&
x_v \ge 0, \forall v \in \V
\end{alignat}
\label{eq:reformulate}
\vskip -20pt
\end{subequations}

The objective function is to minimize the total time used by the rate
vectors. Constraints (\ref{const:sizeCGA}) state that for each flow,
the total amount of data sent from the source and received at the
destination equals its specified size. Constraints
(\ref{const:deadlineCGA}) are formulated for deadlines.

We remark that, while addressing the deadline requirements, the order
of the rate vectors in the schedule is of significance. Indeed, at
the first glance, it appears that using only the $x$-variables is not
sufficient -- we need also variables indicating the position of each
vector in the scheduling solution.  In the following, we show that
this is not necessary. Namely, the $x$-variables together with
(\ref{const:deadlineCGA}) which has the effect of partial ordering,
correctly consider the deadline constraints. To this end,
let us first consider the derivation of a schedule
based on the values of the $x$-variables of the MP.
This is necessary, because the
$x$-variables themselves do not carry any information of how the
rate vectors should be sequenced.

We present Algorithm~\ref{alg:Constructing_schedule_from_MP} that
derives a schedule using the $x$-variables. The input set to the
algorithm, $\mathcal{V}^+$, consists of rate vectors for which the
$x$-variables have strictly positive value at the optimum of the
RMP. As there are of $O(F)$ constraints in the LP, the number of such
rate vectors is of $O(F)$. The algorithm considers each vector $v$ in
$\mathcal{V}^+$, and finds the smallest flow index $f$ with positive
rate $v_f^v$.  The vector is then added to the corresponding set
$\mathcal{Q}_f$.  When all the vectors in $\mathcal{V}^+$ have been
examined, the algorithm goes through sets $\mathcal{Q}_1, \dots,
\mathcal{Q}_F$, to retrieve the vectors and their respective
time durations for problem solution. Notation $p_f$ in the algorithm
represents a time point by which flow $f$ is surely completed.

\begin{algorithm}
\caption{Constructing a schedule from MP}
\label{alg:Constructing_schedule_from_MP}
\begin{algorithmic}[1]
\REQUIRE $\mathcal{V}^+$
\STATE$p_0\leftarrow 0$, $\mathcal{Q}_f \leftarrow \emptyset,~\forall f \in \mathcal{F}$
\FOR{$v \in \mathcal{V}^+$}
\STATE $f^+ \leftarrow \min\{f \in \mathcal{F}:  r_f^v > 0\}$
\STATE $\mathcal{Q}_{f^+} \leftarrow \mathcal{Q}_{f^+} \cup \{v\}$
\ENDFOR
\FOR{$f=1:F$}
\STATE Schedule rate vectors $v \in \mathcal{Q}_f$ with the respective time durations
\STATE $p_f \leftarrow p_{f-1} +\sum\limits_{v \in \mathcal{Q}_f}x_v$
\ENDFOR
\end{algorithmic}
\end{algorithm}

The complexity of Algorithm~\ref{alg:Constructing_schedule_from_MP} is
determined by its first for-loop, since the second for-loop simply
retrieves the solution.  As the size of $\mathcal{V}^+$ is at most $F$,
the first for-loop goes though no more than $F$ vectors. For each of
them, determining the flow index by the min-operator clearly has a
complexity of $O(F)$. Hence the overall algorithm is of complexity
$O(F^2)$.

\begin{lemma}\label{lem:correctnessMP}
For any input representing a feasible solution of the MP, the
output of Algorithm \ref{alg:Constructing_schedule_from_MP}
corresponds to a feasible schedule.
\end{lemma}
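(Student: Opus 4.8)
The plan is to verify directly the two defining properties of a feasible schedule --- (i) every flow $f$ has its full size $s_f$ delivered, and (ii) this delivery is completed no later than the deadline $t_f$ --- while respecting arc capacities at every instant. I would dispose of capacity feasibility first, since it is immediate: by construction every $v \in \mathcal{V}$ is itself a feasible end-to-end rate vector (it satisfies flow conservation, the flow-rate constraints, and the arc-capacity constraints), and the schedule produced by Algorithm~\ref{alg:Constructing_schedule_from_MP} activates exactly one rate vector at a time. Hence at each moment the instantaneous allocation coincides with a single feasible $v$, and no capacity constraint is ever violated. Property (i) is then nothing more than the size constraint \eqref{const:sizeCGA}: the total data delivered for flow $f$ is $\sum_{v \in \mathcal{V}_f} r_f^v x_v = s_f$, because vectors outside $\mathcal{V}_f$ carry zero rate for $f$.

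The substance of the lemma is property (ii), and the key structural fact I would establish is that the sets $\mathcal{Q}_1,\dots,\mathcal{Q}_F$ form a partition of $\mathcal{V}^+$: each vector $v$ is placed in $\mathcal{Q}_{f^+}$ for the unique index $f^+ = \min\{f : r_f^v > 0\}$ (well defined, since an all-zero vector delivers nothing and can be discarded without affecting feasibility). The decisive observation is that any vector carrying flow $f$, i.e.\ any $v \in \mathcal{V}_f \cap \mathcal{V}^+$, satisfies $f^+ \le f$, and is therefore placed in one of $\mathcal{Q}_1,\dots,\mathcal{Q}_f$. Since the second loop schedules the queues in the order $\mathcal{Q}_1,\mathcal{Q}_2,\dots$, every such vector is scheduled within the window $[0,p_f]$, where $p_f = \sum_{g=1}^{f}\sum_{v\in\mathcal{Q}_g} x_v$. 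Consequently all of flow $f$'s data is delivered by time $p_f$; this also cleanly accommodates non-consecutive delivery of a single flow, as different vectors carrying $f$ may sit in different earlier queues.

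It then remains to bound $p_f$ by $t_f$. Using the partition, $p_f = \sum_{v \in \mathcal{Q}_1 \cup \dots \cup \mathcal{Q}_f} x_v$, and since $\mathcal{Q}_1 \cup \dots \cup \mathcal{Q}_f = \mathcal{V}^+ \cap (\mathcal{V}_1 \cup \dots \cup \mathcal{V}_f)$ while $x_v = 0$ for every $v$ outside the support $\mathcal{V}^+$, this sum equals $\sum_{v \in \mathcal{V}_1 \cup \dots \cup \mathcal{V}_f} x_v$. The deadline constraint \eqref{const:deadlineCGA} bounds exactly this quantity by $t_f$, giving $p_f \le t_f$ and hence completion of flow $f$ before its deadline. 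Combined with property (i) and the capacity argument, the constructed schedule is feasible.

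I expect the main obstacle to be conceptual rather than computational: justifying that assigning a multi-flow vector to the queue of its lowest-indexed flow is the correct accounting. The crux is the set equality $\mathcal{Q}_1 \cup \dots \cup \mathcal{Q}_f = \mathcal{V}^+ \cap (\mathcal{V}_1 \cup \dots \cup \mathcal{V}_f)$, which is precisely what lets the partial-order constraint \eqref{const:deadlineCGA} --- written over unions of the $\mathcal{V}_f$ --- translate into a genuine bound on the completion time $p_f$ of the constructed schedule. The ascending-deadline indexing $t_1 \le \dots \le t_F$ is essential here: scheduling a shared vector as early as its lowest-index flow demands never jeopardizes the later, larger deadlines of the other flows that the same vector carries.
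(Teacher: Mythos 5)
Your proposal is correct and follows essentially the same route as the paper's proof: the same assignment of each vector to the queue of its lowest-indexed positive flow, the same key identity $\mathcal{Q}_1 \cup \dots \cup \mathcal{Q}_f$ versus $\mathcal{V}_1 \cup \dots \cup \mathcal{V}_f$ on the support, and the same invocation of \eqref{const:sizeCGA} for delivered volume and \eqref{const:deadlineCGA} to bound $p_f$ by $t_f$. Your explicit treatment of instantaneous capacity feasibility and of the partition structure is a slightly more detailed write-up of what the paper leaves implicit, but not a different argument.
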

\begin{proof}
Consider any flow $f$, $f \in \mathcal{F}$.
According to
the definition of $\mathcal{Q}_f$, flow $f$ is not scheduled
after time point $p_f$, because after this time point
there is no vector with positive rate for flow $f$.  Thus, the
overall amount of data delivered for $f$ is
$\sum_{i=1}^{f} \sum_{v \in \mathcal{Q}_i} r_f^v ~ x_v$. This is equal
to $\sum\limits_{v \in \mathcal{V}_f} r_f^v ~ x_v $ which in turn
equals $s_f$ by (\ref{const:sizeCGA}).
Denote the completion time of flow $f$ by $\chi_f$.
The total time for scheduling flows in
range $[1,f]$ is $\sum\limits_{v \in \mathcal{Q}_1 \cup \dots \cup
\mathcal{Q}_f} x_v$. We have
$\chi_f \le \sum\limits_{v \in \mathcal{Q}_1 \cup \dots \cup
\mathcal{Q}_f} x_v=\sum\limits_{v \in \mathcal{V}_1\cup \dots \cup
\mathcal{V}_f} x_v $ which is no greater than $t_f$ by
(\ref{const:deadlineCGA}). Note that the last equality
follows from that $\mathcal{Q}_f =\{v \in \mathcal{V}_f | r_i^v=0 \text{~for~} i \in \{1,..,f\} \& x_v>0\}$.
\end{proof}

What still remains is to show that (\ref{const:deadlineCGA})
itself is correct, i.e., in the sense that if IFDP is feasible,
then there must exist a solution satisfying (\ref{const:deadlineCGA}).

\begin{lemma}\label{lem:correct}
If IFDP is feasible, then there exists an $x$-solution satisfying
(\ref{const:deadlineCGA}).
\end{lemma}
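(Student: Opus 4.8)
The plan is to start from a feasible solution of IFDP and read off the $x$-variables directly from it, then verify that the resulting vector satisfies both (\ref{const:sizeCGA}) and (\ref{const:deadlineCGA}). A feasible IFDP solution assigns to every time instant one rate vector from $\mathcal{V}$, namely the vector of instantaneous end-to-end rates of all flows at that instant. I would therefore define $x_v$ to be the total length of time during which rate vector $v$ is in use. Since $\mathcal{V}$ is finite and exactly one rate vector is active at any instant, the sets of time instants associated with distinct rate vectors are pairwise disjoint, and $x_v$ is simply the measure of the set of instants at which $v$ is active.

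With this definition, the size constraint (\ref{const:sizeCGA}) is immediate: the total data delivered for flow $f$ is the time-integral of its instantaneous rate, which, after grouping instants by the active rate vector, equals $\sum_{v \in \mathcal{V}_f} r_f^v x_v$, and feasibility of the IFDP solution forces this to equal $s_f$. The substance of the lemma lies in (\ref{const:deadlineCGA}). The key observation is that, in a feasible IFDP solution, flow $f$ receives its whole demand within $[0,t_f]$ and hence has zero rate at every instant after $t_f$; equivalently, every rate vector assigning positive rate to flow $f$, i.e.\ every $v \in \mathcal{V}_f$, is used only within $[0,t_f]$.

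I would then invoke the assumption that flows are indexed in ascending order of deadlines, so $t_i \le t_f$ whenever $i \le f$. It follows that every rate vector in $\mathcal{V}_1 \cup \dots \cup \mathcal{V}_f$ is confined to $[0,t_f]$: a vector serving some flow $i \le f$ is confined to $[0,t_i] \subseteq [0,t_f]$. Using the pairwise disjointness of the time sets of distinct rate vectors, the sum $\sum_{v \in \mathcal{V}_1 \cup \dots \cup \mathcal{V}_f} x_v$ equals the measure of the set of instants at which some vector from this union is active; since all such instants lie in $[0,t_f]$, this measure is at most $t_f$, which is exactly (\ref{const:deadlineCGA}).

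The step I expect to demand the most care is the confinement argument in the presence of a rate vector that serves several flows simultaneously. One must note that such a vector is confined to the interval dictated by its tightest (smallest-index) deadline, and that this interval is contained in every $[0,t_f]$ for which the vector is counted in the union, so no inconsistency arises. The only place where the single-rate-vector-per-instant structure of IFDP is genuinely needed is in passing from ``the sum of the durations of vectors in a set'' to ``the measure of the union of their active time intervals,'' which is what yields the bound by $t_f$ rather than a larger quantity.
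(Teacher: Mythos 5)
Your proof is correct, but it takes a genuinely different route from the paper's. The paper views the feasible IFDP solution as an ordered sequence of (rate vector, duration) pairs and runs an exchange argument: whenever a vector whose smallest positively-served flow index is $f$ precedes a vector $v'$ serving some $f'<f$, the former is moved to just after $v'$; since all deadlines were met before the swap and the flows are indexed in ascending order of deadline, feasibility is preserved, and finitely many swaps produce an ordering from which (\ref{const:deadlineCGA}) is read off. You avoid any reordering: you set $x_v$ equal to the measure of the instants at which $v$ is active, note that every $v\in\mathcal{V}_1\cup\dots\cup\mathcal{V}_f$ with $x_v>0$ is confined to $[0,t_f]$ (the flow $i\le f$ it serves has zero rate after $t_i\le t_f$), and use the disjointness of the active-time sets of distinct vectors to bound the sum of their durations by $t_f$. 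Your argument is more direct and shows the stronger fact that (\ref{const:deadlineCGA}) holds for the $x$-values of \emph{any} feasible IFDP schedule without modification, whereas the paper's swap-based proof additionally exhibits an explicit feasible ordering of the columns, in the spirit of Algorithm~\ref{alg:Constructing_schedule_from_MP}. The one step you should state explicitly is the (harmless) normalization behind ``receives its whole demand by $t_f$, hence has zero rate after $t_f$'': this uses that a feasible solution delivers exactly $s_f$ (the demand constraint is an equality), so no data of flow $f$ is carried past its deadline; with that remark in place the confinement argument, including the case of vectors serving several flows simultaneously, is airtight.
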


\begin{proof}
Clearly, a solution to IFDP includes a set of rate vectors as well as
their time durations. The latter are the $x$-variables in
\eqref{eq:mp}. Assume feasibility, that is, there is some ordering of
these rate vectors, such that the sequence together with the values of the
corresponding $x$-variables satisfy all the deadlines.  Now suppose
$f' < f$, and in the sequence, a rate vector $v$ that has positive
rate for flow $f$ but zero rate for all flows $1, \dots, f'$, is
scheduled before a rate vector $v'$ with positive rate for flow
$f'$. Because all deadlines are met, $t_{f'}$ is not reached by the
end of the use of $v'$. Consider updating the solution, by putting
$v$ immediately after $v'$ instead.  Thus $v'$, along with
all vectors scheduled between $v$ and $v'$ in the original
sequence, are shifted earlier in time by $x_{v_f}$. Now the ending time of
vector $v$ equals that for $v'$ before the update. Because
$t_{f'} \leq t_f$, the deadline of $f$ remains satisfied.
The deadlines of the other flows are clearly also satisfied.
Repeating if necessary, it is apparent that after a finite number
of updates, the new $x$-solution satisfies (\ref{const:deadlineCGA}),
and the result follows.
\end{proof}

By the above result, the MP formulation is indeed a correct
mathematical model of IFDP.  Thus, even if there is an inherent
timeline in IFDP, the $x$-variables are sufficient for formulating the
deadlines.

The RMP is a restricted version of MP, such that $\V_f$ is replaced by
a subset $\V'_f$, with $|\V^{\prime}_f| \ll |\V_f|$, $f \in \F$. Let $\V' = \V'_1 \cup
\dots \cup \V'_F$. Apart from this difference, MP and RMP have the same
objective function and constraints. Hence we do not write out RMP in
its full form to save space. Also, for convenience, we will use
\eqref{const:sizeCGA} and \eqref{const:deadlineCGA}
to refer to the constraints of RMP with the restricted sets, as long
as this does not lead to ambiguity. When constructing RMP initially,
it is preferable that $\V'$ results in a feasible solution. This is
however not really a crucial issue, as one can apply a penalty-like phase
in case of initial infeasibility.

\subsection{SP Formulation}

After solving the RMP, we need to determine whether the current solution
is optimal.  As the MP is an LP, this amounts to finding an
unconsidered rate vector with negative reduced cost \cite{viableCG}.
The task is accomplished via solving an SP, of which the solution
is an end-to-end rate vector, resulted from routing and capacity allocation.
For SP, we reuse the notation $y$ and $z$ for variables.  The
difference to Section~\ref{sec:timeSlicing} is that there is no time
slice index for these variables here. The end-to-end rate of flow $f$
is denoted by variable $r_f$.

Denote by $\lambda_f^*$ and $\pi_f^*$ the optimal dual variable
values of \eqref{const:sizeCGA} and \eqref{const:deadlineCGA} of flow
$f$ in the RMP. Note that, by the structure of
\eqref{const:deadlineCGA}, $\pi_f^*$ appears in the reduced cost if
and only if the rate vector to be generated has strictly positive rate
for any of the flows $1, \dots, f$.  Hence, if we know $f^+$ is the
first flow index with positive rate in the generated rate vector, the
SP can be formulated as follows, where the objective function together
with the constant term $-\sum_{f = f^+}^F \pi_f^*$ corresponds to
reduced cost.

\begin{subequations}
\begin{alignat}{2}
\quad &
\min~~1-\sum_{f=f^+}^F r_f \lambda_f^*
 \\
\text{s.t}. \quad
&
\sum_{\{j|(i,j) \in \mathcal{A}\}}y_{fij}-\sum_{\{j|(j,i) \in \mathcal{A}\}}y_{fji}=
\begin{cases}
  -r_f,                & \text{if } i=o_f\\
    r_f,              & \text{if } i=d_f,~\forall f \in \mathcal{F}, \forall i \in \mathcal{N}\\
    0,                & \text{otherwise}
\end{cases} \label{const:flowbalanceCGA} \\
&
y_{fij} \le \sum_{m=1}^{k} u_{m} z_{fij}^{m},\forall f \in \mathcal{F}, \forall (i,j) \in \mathcal{A} \label{const:capacity1CGA} \\
&
\sum_{m=1}^{k}\sum_{f \in \F}  u_{m} z_{fij}^{m} \le c_{ij}, \forall (i,j) \in \mathcal{A} \label{const:capacity2CGA} \\
&
r_f\ge0, f\ge f^+ \label{const:rf1}\\
&
r_f =0, f< f^+ \label{const:rf2}\\
&
y_{fij} \geq 0, \forall f \in \mathcal{F}, \forall (i,j) \in
\mathcal{A} \\
&
z_{fij}^{m} \geq 0, \text{integer}, \forall f \in \mathcal{F}, \forall (i,j) \in
\mathcal{A}, \forall m \in \{1,\dots,k\}
\end{alignat}
\label{eq:subproblem}
\vskip -20pt
\end{subequations}

In the formulation, \eqref{const:flowbalanceCGA},
\eqref{const:capacity1CGA}, and \eqref{const:capacity2CGA} deal with
flow conservation, capacity allocation, and capacity limit. These
constraints are the counterparts of (\ref{const:flowbalanceTSA}),
(\ref{const:capacity1TSA}), and (\ref{const:capacity2TSA}),
respectively. The other constraints state the variables domains.
Among them, \eqref{const:rf1} and \eqref{const:rf2} set the domains of
the rate values with respect to index $f^+$.

If at the optimum $r_f^*, f \in \{f^+,\dots, F\}$, $1-\sum_{f=f^+}^F r^*_f \lambda_f^* - \sum_{f =
f^+}^F \pi_f^*$, which is the most negative reduced cost among all
possible rate vectors, is negative, we add the optimal rate vector,
$v^* = [r_1^{*},\dots,r_F^{*}]^T$, to RMP, and move to the next
iteration. Otherwise, the current solution via the RMP is optimal.

A solution of \eqref{eq:subproblem} does not necessarily have strictly
positive rate for $f^+$. This is because one cannot enforce something
to be positive but the amount to be arbitrary in optimization. Hence,
to be more precise, $f^+$ is the first flow index of which the rate
may be positive.  Note that $-\pi_f, f \in \{1, \dots, F\}$, are non-negative,
hence they are costs in minimization.  In \eqref{eq:subproblem},
the costs of allowing positive rates for flows $f^+, \dots, F$ are taken
into account and hence these flows may be allocated positive rates,
whereas for flows $1, \dots, f^+ -1$, the costs are excluded and
they must have zero rate. We also remark that, if it turns
out that the first index of positive flow is $f^{++} > f^+$,
the correct reduced cost equals $1-\sum_{f=f^+}^F r^*_f \lambda_f^* - \sum_{f =
f^{++}}^F \pi_f^*$ as $-\pi_f, f \in \{f^+, \dots f^{++}-1\}$ have to be omitted.
Note that in the first sum, the first index can be either $f^+$ or $f^{++}$
without affecting its correctness.

\subsection{Solving the SP}

That $f^+$ is in fact unknown can be addressed by solving a sequence
of SPs with different values of $f^+$. Intuitively, one can set $f^+$
to $1, \dots, F$ and solve the SP exactly $F$ times.  In the
following, we show that better efficiency may be achieved by observing
the first positive flow in the solution, while going through the
sequence.

Denote by $\text{SP}_{f^+}$ the SP for given index $f^+$ indicating
the first flow that may have positive rate.  Thus flows $1,\dots, f^+
- 1$ have zero rate, and flow $f^+, \dots, F$ have non-negative rate
in the solution of $\text{SP}_{f^+}$. Suppose that in the solution of
$\text{SP}_{f^+}$, $f^{++} \geq f^++1$ is the first flow having positive
rate. Hence the current objective function value is
$1-\sum\limits_{f=f^{++}}^{F}r^*_f
\lambda_f^*-\sum\limits_{f=f^{++}}^F \pi_f^*$.  The following lemma
states that this value cannot be improved by solving $\text{SP}_{f^+ +
1},
\dots, \text{SP}_{f^{++}}$.

\begin{lemma}\label{lem:skipping}
If after solving $\text{SP}_{f^+}$, the first flow with positive rate is flow $f^{++} \geq f^+ + 1$,
then  $\text{SP}_{f^+ +1}, \dots \text{SP}_{f^{++}}$ can be discarded without loss of optimality.
\end{lemma}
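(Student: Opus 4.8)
The plan is to show that no subproblem $\text{SP}_g$ with $f^+ < g \le f^{++}$ can expose a column whose reduced cost undercuts the value $1-\sum_{f=f^{++}}^{F} r_f^*\lambda_f^* - \sum_{f=f^{++}}^{F}\pi_f^*$ already delivered by $\text{SP}_{f^+}$, except for columns that the retained subproblems recover anyway. I would first fix the book-keeping of reduced cost. Writing $P_j := -\sum_{f=j}^{F}\pi_f^*$, the text already observes that each $-\pi_f^*$ is non-negative, so $P_j\ge 0$ and $P_j$ is non-increasing in $j$. Any feasible column whose first positive flow is index $h$ and whose subproblem objective equals $O$ has reduced cost $O+P_h$; in particular $\text{SP}_{f^+}$ delivers $O_{f^+}+P_{f^{++}}$, where $O_{f^+}$ denotes the optimal value of $\text{SP}_{f^+}$.

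Next I would record two structural facts about the family $\{\text{SP}_g\}$. First, their feasible regions are nested: passing from $\text{SP}_g$ to $\text{SP}_{g+1}$ merely imposes the additional equality $r_g=0$, so $\text{SP}_{g+1}$ is a restriction of $\text{SP}_g$. Second, on the feasible set of $\text{SP}_g$ every point has $r_f=0$ for $f<g$, so its objective $1-\sum_{f=g}^{F}r_f\lambda_f^*$ coincides with $1-\sum_{f=1}^{F}r_f\lambda_f^*$; minimizing one and the same function over shrinking sets yields $O_{f^+}\le O_{f^{+}+1}\le\dots\le O_F$. The crux is to pin these values down on the relevant range: by hypothesis the solution returned for $\text{SP}_{f^+}$ has $r_f=0$ for all $f<f^{++}$, hence it is feasible for every $\text{SP}_g$ with $g\le f^{++}$ and attains value $O_{f^+}$ there, and combined with the monotonicity this forces $O_g=O_{f^+}$ for all $g\in\{f^+,\dots,f^{++}\}$.

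To conclude I would argue at the level of first-positive-flow indices. Let $R(j)=M_j+P_j$ be the smallest reduced cost attainable by any column whose first positive flow is exactly $j$, with $M_j$ the least subproblem objective among such columns; note $M_{f^{++}}=O_{f^+}$, since the returned solution realizes this value and any first-positive-flow-$f^{++}$ column is feasible for $\text{SP}_{f^+}$. For $j\in\{f^{+}+1,\dots,f^{++}\}$ the equality $O_{f^{+}+1}=O_{f^+}$ gives $M_j\ge O_{f^{+}+1}=O_{f^+}=M_{f^{++}}$, while $j\le f^{++}$ gives $P_j\ge P_{f^{++}}$; adding these, $R(j)\ge M_{f^{++}}+P_{f^{++}}=R(f^{++})$, which is precisely the reduced cost $\text{SP}_{f^+}$ already produced. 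Thus every column whose first positive flow lies in the skipped range is no better than one already in hand, which is the content of the lemma, and columns whose first positive flow exceeds $f^{++}$ fall to the retained subproblems $\text{SP}_{f^{++}+1},\dots,\text{SP}_F$, so discarding $\text{SP}_{f^{+}+1},\dots,\text{SP}_{f^{++}}$ is safe.

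I expect the main obstacle to be the book-keeping around alternative optima and columns with first positive flow beyond $f^{++}$. Because a subproblem's reduced cost is governed by the actual first positive index $h$ of its returned solution rather than by the nominal starting index $g$, a skipped $\text{SP}_g$ could in principle expose a column with $h>f^{++}$ and hence smaller $P_h$; one must verify that such a column is exactly what a retained subproblem recovers, so that safety holds for column-generation optimality as a whole and not merely for the indices $\{f^{+}+1,\dots,f^{++}\}$. Establishing this cleanly amounts to applying the same inequality repeatedly along the sequence of solved subproblems, whereas the nestedness and the monotonicity of the $O_g$ are routine.
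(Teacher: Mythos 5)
Your proof is correct and follows essentially the same route as the paper's: both rest on the nestedness of the feasible regions ($\text{SP}_{g+1}$ is $\text{SP}_g$ plus $r_g=0$) together with the observation that the optimum of $\text{SP}_{f^+}$, having first positive index $f^{++}$, remains feasible (hence optimal) in every skipped $\text{SP}_g$, so those subproblems can produce nothing new. Your version is somewhat more careful than the paper's one-line conclusion $\theta_m=\theta_{f^+}$, since you explicitly track the reduced-cost term $P_h$ against the \emph{actual} first positive index of alternative optima, a subtlety the paper glosses over.
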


\begin{proof}
Denote by $\theta_{f^+}$ the most negative reduced cost obtained by
solving $\text{SP}_{f^+}$.  For the statement in the lemma,
$\theta_{f^+} = 1-\sum\limits_{f={f^{++}}}^{F}r^*_f
\lambda_f^*-\sum\limits_{f={f^{++}}}^F
\pi_f^*$.  Consider any $m \in [f^+ + 1, f^{++}]$. Any feasible solution of
$\text{SP}_m$ is also feasible in $\text{SP}_{f^+}$. This is because
the first flow that may be positive in the former is $m$, and
$m>f^+$. In other words, the solution space of $\text{SP}_{f^+}$ is
greater. Moreover, the optimum of $\text{SP}_{f^+}$ is feasible in
$\text{SP}_{m}$ because $m \leq f^{++}$. Therefore this solution must
also be optimal to $\text{SP}_{m}$. Hence $\theta_{m} = \theta_{f^+}$,
and the lemma follows.
\end{proof}

Based on what has been derived thus far, we present the procedure of
solving the SP in Algorithm~\ref{alg:SP}.
Here $r_f^*$ shall be understood and short-hand notation of
$r_f^{v^*}$. Also, note that, even if in the algorithm
presentation, one rate vector is produced by the end, in the process
multiple vectors with negative reduced costs (but with different flow
indices as the first positive element) may be produced. These can all
be added to the RMP, potentially speeding up column generation.

\begin{algorithm}
\caption{Algorithm for solving the SP}
\label{alg:SP}
\begin{algorithmic}[1]
\REQUIRE $\pi_f^*,\lambda_f^*, ~\forall f \in \mathcal{F}$
\ENSURE ${v^*}$
\STATE $f^+ \leftarrow 1$, $\text{optimum} \leftarrow \infty$, $v^* \leftarrow \emptyset$, $f^* \leftarrow 0$
\WHILE{$f^+\le F$}
\STATE Solve $\text{SP}_{f^+}$ to obtain its optimal rate vector $v^*$
\STATE $f^{++} \leftarrow \min \{f \in \{f^+, \dots, F | r^*_f >0 \}\} $
  \IF {($1-\sum\limits_{f=f^{++}}^{F}r^*_f \lambda_f^*-\sum\limits_{f=f^{++}}^F \pi_f^*< \text{optimum}$)}
       \STATE $\text{optimum} \leftarrow$ $1-\sum\limits_{f=f^{++}}^{F}r^*_f \lambda_f^*-\sum\limits_{f \ge f^{++}} \pi_f^*$
       \STATE ${v}^* \leftarrow [0,\dots,0,r_{f^{++}}^*,\dots,r_F^*]^T$, $f^* \leftarrow f^{++}$, $f^+ \leftarrow f^{++}+1$
   \ENDIF
\ENDWHILE
\end{algorithmic}
\end{algorithm}

\subsection{Summary of CGA}

We provide a summary of CGA in an algorithmic form in Algorithm~\ref{alg:frameworkCGA}.
The correctness of CGA in terms of optimality is then formally stated.

\begin{algorithm}
\caption{CGA}
\label{alg:frameworkCGA}
\begin{algorithmic}[1]
\STATE Start with an initial set of rate vectors $\V'$
\REPEAT
\STATE Solve the RMP
\STATE Compute $v^*$ and $f^*$ via Algorithm \ref{alg:SP}
\IF {($1-\sum\limits_{f=f^*}^F r_f \lambda_f^*- \sum\limits_{f=f^*}^F \pi_f^* \}< 0$)}
    \STATE $\V' \leftarrow \V' \cup \{v^*\}$
\ENDIF
\UNTIL {($1-\sum\limits_{f=f^*}^F r_f \lambda_f^*- \sum\limits_{f=f^*}^F \pi_f^* \} = 0$)}
\end{algorithmic}
\end{algorithm}

\begin{theorem}\label{th:CGA}
CGA solves IFDP to global optimality within a finite number of steps.
\end{theorem}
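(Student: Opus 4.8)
The plan is to combine two ingredients. First, I would argue that the master problem \eqref{eq:reformulate} is an exact continuous reformulation of IFDP, so that the optimal value of this linear program equals the IFDP optimum and an optimal basis maps back to an optimal schedule. Second, I would argue that CGA (Algorithm~\ref{alg:frameworkCGA}) is a column generation procedure whose pricing step is exact and whose column pool $\V$ is finite, so that it terminates after finitely many iterations at an optimal MP basis. Composing the two, the schedule recovered from that basis via Algorithm~\ref{alg:Constructing_schedule_from_MP} is a global optimum of IFDP.

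For the reformulation equivalence I would use the two structural lemmas already proved. By Lemma~\ref{lem:correct}, every feasible IFDP solution can be reordered into an $x$-solution satisfying \eqref{const:deadlineCGA} without changing its total duration $\sum_{v\in\V} x_v$, which gives a feasible point of the MP whose objective equals the makespan; hence the MP optimum is at most the IFDP optimum. Conversely, by Lemma~\ref{lem:correctnessMP}, Algorithm~\ref{alg:Constructing_schedule_from_MP} turns any feasible MP point into a feasible IFDP schedule, and since the construction leaves no idle time the makespan is exactly $p_F=\sum_{v\in\V}x_v$, so the IFDP optimum is at most the MP optimum. The two values therefore coincide, and an optimal MP solution yields an optimal schedule.

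Next I would verify that each pass of the repeat-loop realizes one correct column generation step. Solving the RMP produces the multipliers $\lambda_f^*,\pi_f^*$ of \eqref{const:sizeCGA} and \eqref{const:deadlineCGA}; the reduced cost of a rate vector $v$ whose first positive flow index is $g$ is $1-\sum_{f=g}^{F} r_f^v\lambda_f^*-\sum_{f=g}^{F}\pi_f^*$, the index dependence coming from the partial-ordering structure of \eqref{const:deadlineCGA}. Subproblem \eqref{eq:subproblem}, solved for each candidate $f^+$, minimizes this over all feasible rate vectors, and Lemma~\ref{lem:skipping} certifies that the indices skipped by Algorithm~\ref{alg:SP} cannot host a strictly smaller value; hence Algorithm~\ref{alg:SP} returns the globally most negative reduced cost over the whole pool $\V$, or certifies that it is nonnegative.

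Finally I would close the argument with finiteness and optimality. At the RMP optimum every column already in $\V'$ prices out nonnegatively with respect to $\lambda^*,\pi^*$, so any column of strictly negative reduced cost returned by Algorithm~\ref{alg:SP} is new; each augmenting iteration thus strictly enlarges $\V'\subseteq\V$, and since $|\V|<\infty$ only finitely many augmentations can occur and CGA halts. Upon halting, the most negative reduced cost is zero, i.e.\ every $v\in\V$ prices out nonnegatively, so the current RMP solution padded with zeros is MP-feasible and satisfies the LP optimality conditions, hence is MP-optimal; by the reformulation equivalence it is then a global optimum of IFDP. I expect the pricing step to be the main obstacle: unlike textbook column generation, the dual contribution of a column depends on the column's own first-positive-flow index, so the delicate point is to confirm that sweeping $f^+$ under the skip rule of Lemma~\ref{lem:skipping} genuinely searches all of $\V$ for the true minimum reduced cost rather than a subset; the finiteness and LP-duality components are routine.
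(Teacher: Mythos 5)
Your proposal is correct and takes essentially the same route as the paper, whose entire proof is a one-line appeal to LP optimality, the finiteness of $\mathcal{V}$, and Lemmas~\ref{lem:correctnessMP}--\ref{lem:skipping}; you have simply expanded those exact ingredients (exactness of the MP reformulation via Lemmas~\ref{lem:correctnessMP} and \ref{lem:correct}, exact pricing via Lemma~\ref{lem:skipping}, and finite termination from $|\mathcal{V}|<\infty$) into a detailed argument.
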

\begin{proof}
The results follows directly from LP optimality, that there are a finite number of rate vectors,
as well as Lemmas~\ref{lem:correctnessMP}-\ref{lem:skipping}.
\end{proof}

There is no unique way of performing the first line in
Algorithm~\ref{alg:frameworkCGA}.  An obvious choice is to apply a
scheme similar to phase one of LP, namely, to introduce additional,
so called artificial variables to represent the amount of
demand not satisfied, and minimize the total unsatisfied demand.
For this reason, we will use phase~I to refer to the first step
of CGA.
Moreover, recall that SP is an integer multicommodity problem that is NP-hard,
hence from a theoretically viewpoint, Algorithm~\ref{alg:SP} is of
exponential time complexity. This holds also for
Algorithm~\ref{alg:frameworkCGA} as it uses Algorithm~\ref{alg:SP}
repeatedly. Numerically, however, CGA performs quite well in
scalability, as will be shown later in
Section~\ref{sec:performance_evaluation}.

\section{Max-Flow based Algorithm}\label{sec:ha}

In this section, we present a heuristic, referred to as Maximum-Flow
based Algorithm (MFA). MFA is different form the classic maximum
flow in the sense that we maximize the end-to-end rate of all
flows. Hence, it is a type of multicommodity flow.

The rationale of MFA is rather intuitive. Namely, the flows are
prioritized with respect to their deadlines.  The priorities are
represented using weights.  The sum of weighted rates of flows is
maximized. Then, the corresponding rate vector is used, until one of
the flows becomes completed. Next, the rates of completed flows are
set to zeros.  The process then repeats for the remaining flows with
updated demand size.  Mathematically, the optimization problem can be
formulated as follows, where $\frac{1}{t_f^2}$ is used as an example
of the weight for flow $f$. In the following, we use $\bf{r^*}$ to
denote the optimum rate vector of the formulation.

\begin{subequations}
\begin{align}
\text{$\text{MFA}$:}
\quad & \max \sum_{f \in \mathcal{F}} \frac{1}{t_f^2} r_f \quad & \\
\qquad & \text{s.t.} \ \
\text{(\ref{const:flowbalanceCGA})},\text{(\ref{const:capacity1CGA})},\text{(\ref{const:capacity2CGA})} \nonumber
\end{align}
\end{subequations}

\begin{algorithm}
\caption{Max-Flow based Algorithm}
\label{alg:MFA}
\begin{algorithmic}[1]
%\REQUIRE network $\mathcal{G}=(\mathcal{N},\mathcal{A})$, set of flows with sorted deadlines in ascending order
%\ENSURE solution for the problem
\STATE $\chi \leftarrow 0$
\WHILE {($\exists f, f \in \mathcal{F}$, with $s_f>0$)}
\STATE Fix $r_f=0$ if $s_f=0$, $f \in \mathcal{F}$ \label{rateto0}
\STATE Solve $\text{MFA}$ and obtain optimum rate vector $\bf{r^*}$
\STATE $\delta^{*} \leftarrow  \min_{f\in\{1,\dots,F:r^*_f>0\}} s_f/r^*_f $
\STATE $f^{*} \leftarrow  \argmin_{f\in\{1,\dots,F:r^*_f>0\}} s_f/r^*_f $
\IF {$\chi+\delta^*\le t_{f^*}$}
 \STATE $s_f \leftarrow s_f - r^*_f \delta^*, f \in \mathcal{F}$
 \STATE$\chi\leftarrow \chi+\delta^*$
\ELSE
\STATE Return "No~Solution"
\ENDIF
\STATE Return $\chi$
\ENDWHILE
\end{algorithmic}
\end{algorithm}

MFA is formalized in Algorithm~\ref{alg:MFA}. For the completed flows
(i.e., flows with zero demand size), their rates are set to zeros in
Line~\ref{rateto0}. For the uncompleted flows, Line $4$ computes
maximum weighted rate vector. Next, the flow with minimum (remaining)
time necessary to be completed is computed by Lines $5$-$6$.  Whether
this flow can be delivered within its deadline or not is checked in
Line $7$. If not, MFA fails in finding a feasible solution. Otherwise,
the demand size and overall completion time are updated in Lines $8$
and $9$, respectively.

In addition to acting as a fast heuristic for large-scale instances of
IFDP, MFA can be used to speed up CGA. Namely, if a feasible solution
is found by MFA, this solution can be used to initialize the columns
in CGA, thus eliminating the need of phase one of LP in the context of
CGA. In the next section we will show significant improvements in
solution time by the combination of MFA and CGA.

\section{Performance Evaluation}
\label{sec:performance_evaluation}

\subsection{Scenario Setup}

We consider three network topologies with different sizes, depicted in
Figures~\ref{fig:SmallNetwork}, \ref{fig:SoftNetwork}, and~\ref{fig:GeantNetwork}, respectively.
The first is a small network composed of $6$ nodes and $8$ bidirectional links. The
second network topology is of Softlayer Inc
\cite{softlayerNetwork}, consisting of $11$ nodes and $17$
bidirectional links. The last topology is the
Geant network \cite{geantnetwork} that consists of $22$ nodes and $36$ bidirectional
links. In all networks, the capacity of each link is set to $10$ units
in each direction, and the unit in capacity allocation is $2$.

\begin{figure}[ht!]
\RawFloats
\begin{minipage}{1\textwidth}
\centering
\includegraphics[scale=0.8]{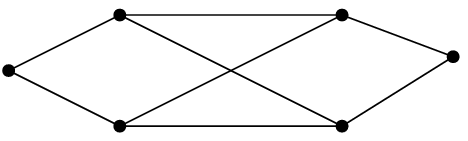}
\begin{center}
\caption{Small network with $6$ nodes and $8$ bidirectional links.}\label{fig:SmallNetwork}
\end{center}
\end{minipage}

\begin{minipage}{1\textwidth}
\centering
\includegraphics[scale=0.8]{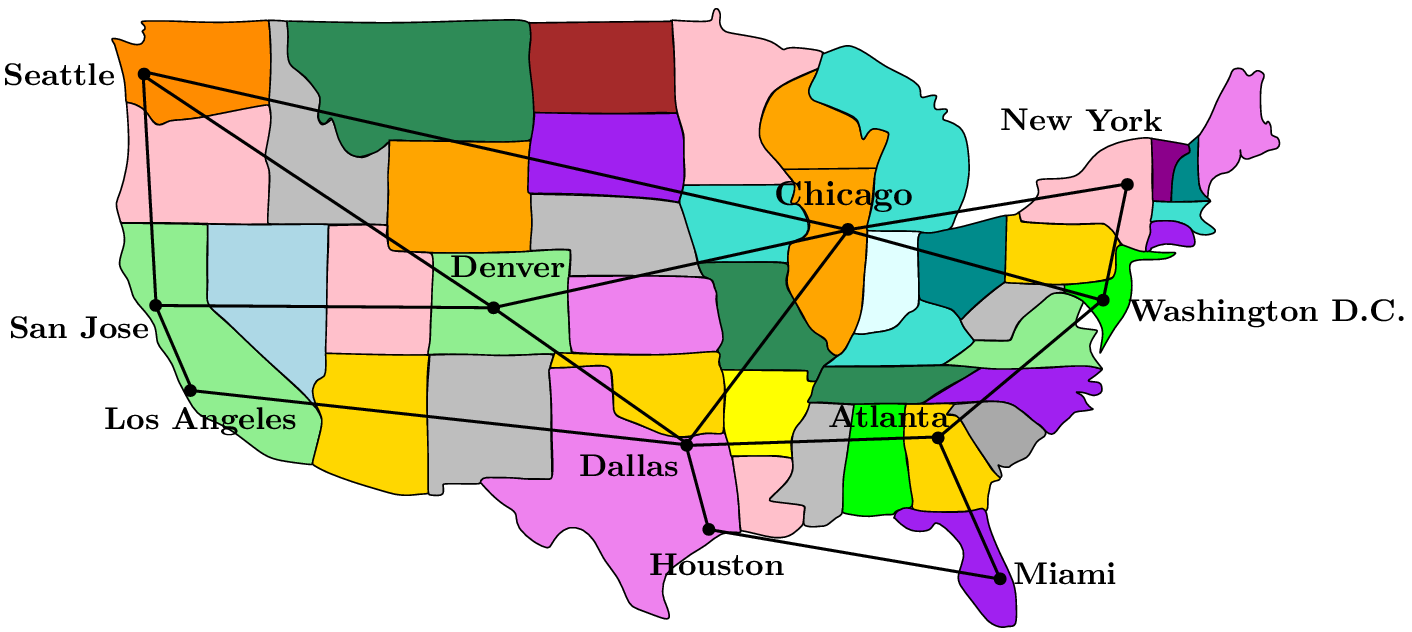}
\caption{Softlayer network with $11$ nodes and $17$ bidirectional links.}\label{fig:SoftNetwork}
\end{minipage}

\begin{minipage}{1\textwidth}
\centering
\includegraphics[scale=0.8]{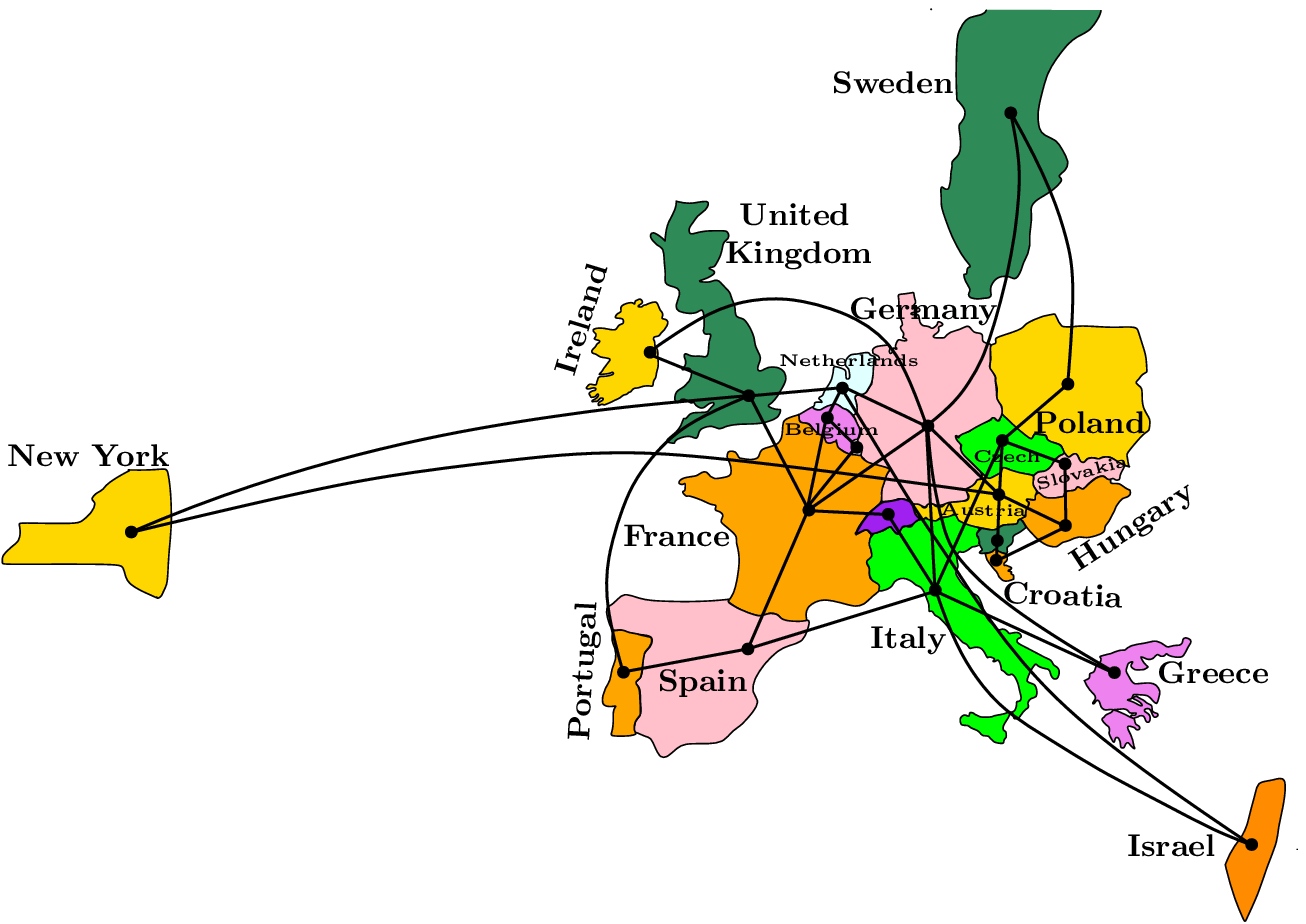}
\caption{Geant network with $22$ nodes and $36$ bidirectional links.}\label{fig:GeantNetwork}
\end{minipage}
\end{figure}

To gain a comprehensive performance view, five different traffic
scenarios with $5$, $10$, $20$, $50$, and $100$ flows are
considered. Note that a flow is characterized by its deadline, in
addition to origin and destination nodes. Hence for the small network,
when there are many flows, some will have the same origin and
destination, though different deadlines.  The origin and destination of
each flow are uniformly and randomly chosen from the network
nodes. The sizes of flows are also uniformly generated in the range of
$[1,100]$. To systematically study the impact of deadline, we set the
deadline of flow $f$ to $d_f=\alpha e_f$, where $\alpha$ is referred
to as deadline factor and $e_f$ denotes the earliest possible
completion time of flow $f$ assuming that all network capacity is
available to this flow.

We set $\alpha$ to obtain two types of scenarios, tight and
moderate. In the tight-deadline scenario, $\alpha$ is set such that
the instances are close to the feasibility/infeasibility
boundary. Next, we increase $\alpha$ by $30\%$, giving the
moderate-deadline scenario. In TSA, the number of time slices is a
multiple ($1$x, $2$x, $3$x) of the number of flows. In 1x, the time
slices are defined by the deadlines, the first slice is from time zero
to the first deadline, the second slice is from the first deadline to the
second deadline, and so on. In $2$x and $3$x, the time slices are
created, starting from the 1x case, followed by subdividing the
longest time slice into two equal ones repeatedly, until the desired
number of slices is obtained.

For each network topology and traffic scenario, we generate $10$
instances and report the average performance.  For Small and Softlayer
networks, we use a computational time limit of $500$ seconds for cases
of up to $50$ flows, and a time limit of $1000$ seconds when there are $100$
flows. For the Geant network, the corresponding time limits are $1500$
and $3000$ seconds.  The experiments were run on a Core i7 PC with a
CPU 2901 MHz and 16 Gigabyte RAM, running the operating system Windows
10. The Gurobi optimizer
\cite{Gurobi} is used for solving the mathematical models in TSA, CGA, and MFA.

\subsection{Performance Results for Tight Deadlines}

With tight deadlines, CGA is always able to obtain the optimal
solution of all instances, whereas TSA and MFA fail to deliver a
feasible solution for many and sometimes all of the
instances. Therefore, we first report the failure rates of TSA and
MFA, as shown in Table~\ref{table:tightdeadlines}.  The failure rate
is defined as ratio of the number of instances for which a solution could
not be obtained over the total number of instances. For TSA, the
infeasibility can be of two reasons. First, the instance with time
slicing is in fact infeasible due to an insufficient number of time
slices. Second, the instance could be feasible but the solver is not
able to verify it within the time limit. For MFA, infeasibility is
purely due to the algorithm, not because of the time limit.

\begin{table}[ht!]
\small
\centering
\begin{tabular}{ c c c c c c }
\hline \hline
&& & TSA (in~\%)&&\\
\cline{3-5}
Network& $F$&$1$x& $2$x& $3$x& MFA (in~\%)\\
\hline \hline
 &5 &80\textbar0 & 80\textbar10 &60\textbar30 &90\\

& 10 & 40\textbar50 &30\textbar40&30\textbar40 &100\\

Small& 20 & 30\textbar70&10\textbar90&0\textbar80 &100\\

& 50 & 40\textbar60&40\textbar60&0\textbar100 &100\\

& 100 & 0\textbar100&0\textbar100&0\textbar100 &100\\
\hline
\hline
 &5 &100\textbar0 & 70\textbar20 &70\textbar20 &100\\
& 10 & 80\textbar20 &50\textbar40&30\textbar60 &100\\

Softlayer& 20 & 10\textbar90&10\textbar80&0\textbar80 &100\\

& 50 & 30\textbar70&10\textbar90&0\textbar100 &100\\

& 100 & 0\textbar100&0\textbar100&0\textbar100 &100\\
\hline
\hline
 &5 &40\textbar30 & 30\textbar20 &30\textbar20 &40\\

& 10 & 30\textbar40 &40\textbar10&30\textbar20 &90\\

Geant& 20 & 10\textbar70&0\textbar60&0\textbar40 &90\\

& 50 & 0\textbar100&0\textbar100&0\textbar100 &100\\

& 100 & 0\textbar100&0\textbar100&0\textbar100 &100\\
\hline
\end{tabular}
\caption{Failure rates for TSA and MFA for instances with tight deadlines.
All values represent the percentage. For TSA, with notation p\textbar
q, p stands for the percentage of instances that are infeasible and q
represents the the percentage that a solution is not found due to time
limit. CGA has zero failure rate and therefore it is not included in
the table.}
\label{table:tightdeadlines}
\end{table}

On average, for Small, Softlayer, and Geant networks, TSA fails for
$90\%$, $95\%$, and $74\%$ of the instances respectively. For Small
network, this is mainly because the number of time slices is too
few in relation to the tight deadlines. This is also the case of
Softlayer and Geant networks, where there are relatively small numbers
of flows. When there are many flows, the failure is to a large extent
caused by the time limit.  By increasing the number of time slices, fewer
instances tend to be infeasible in TSA. This is expected by the
approach of time slicing. On the other hand, using a greater number of
time slices makes the problem size significantly larger, and consequently
the solution time becomes the bottleneck.  Note that TSA has a lower
failure rate for $5$, $10$, and $20$ flows in Geant Network. The reason is
that Geant is larger than the other two networks, hence it is easier
to obtain feasibility when there are few flows.

MFA has an extremely high failure rate. Feasibility is
attained only for Geant network and some of the instances with small
numbers of flows. Thus, an intuitive heuristic, such as MFA, that uses
deadline as priority is not a good choice if the deadlines are
tight.

\begin{figure}[ht!]
\vspace{-2mm}
\includegraphics[width=0.45\textwidth]{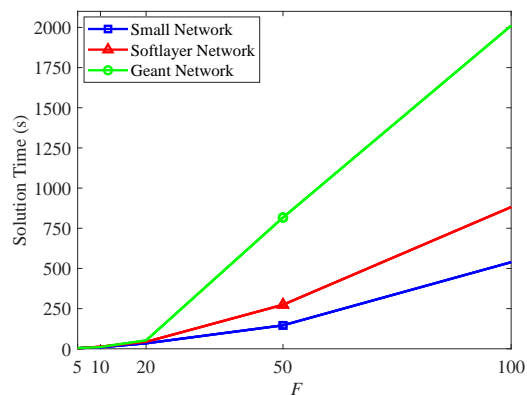}
  \vspace{-2mm}
   \caption{Solution time of CGA for the tight-deadline instances.}
  \vspace{-1mm}
  \label{fig:soltime_tight}
\end{figure}

The average time used by CGA to solve the instances is shown in
Figure~\ref{fig:soltime_tight}. CGA never hits the time limit, hence
the solutions are optimal. The time increases with respect to network
size as well as the number of flows. However, the rate of increase is
relatively moderate. For example, when the number of flows grows from
50 to 100, the increase in solution time is less than 3 times.

\subsection{Performance Results for Moderate Deadlines}

For instances with moderate deadlines, we examine three performance
aspects: 1) the optimality gap of TSA and MFA, 2) the failure rate of
these two approaches, and 3) the solution time of them in comparison
to CGA.  The numerical results are shown in
Figures~\ref{fig:smallPerform}-\ref{fig:geantPerform}.  We make the
following observations based on the results.

\subsubsection{Optimality gap and failure rate}

TSA and MFA show consistent results for the three networks.  As the
number of flows increases, the optimality gap of TSA decreases. This
is because the base number of time slices equals $F$, hence the
granularity increases with $F$.  For example for Softlayer network,
the gap by TSA($1$x) is about $12\%$ for $5$ flows and decreases to
only $3\%$ for $50$ flows. It can be seen that the optimality gap for
$100$ flows increases again, this is due to the fact that TSA hits the
time limit, see Figure~\ref{fig:softPerform}(a).  The failure rate
increases very significantly, see Figure~\ref{fig:softPerform}(b),
because of a higher risk that TSA terminates pre-maturely due to the
time limit. Moreover, the effect of scaling up, i.e., going from $1$x
to $2$x and $3$x, is apparent for small $F$, because the granularity
becomes significantly improved.  For large $F$, it has little impact
on optimality gap, but leads often to infeasibility as the problem size grows
considerably. For Softlayer network and $100$ flows, for example, the
failure rate of TSA($1$x) is $80\%$, and increases to $90\%$ for both
TSA($2$x) and TSA($3$x).

The optimality gap of MFA is clearly larger than that of TSA, and
peaks at approximately 20\% for Small network.  Note that the
optimality gap of MFA grows first, but then decreases in $F$. One
explanation is that the sub-optimality of MFA, due to imposing
priority strictly following deadlines, is first magnified by the
problem size, here the number of flows. However, when there are many
flows, the scheduling order in high-quality solutions become more
coherent with the deadlines, as it is less likely that combining flows
with deadlines being far apart will result in feasibility. As for the
failure rate, MFA performs well in this aspect for the three networks.

\subsubsection{Solution time}

The solution times are the average values over the instances for which
TSA and MFA were able to obtain a feasible solution within the time
limit. As a general trend, TSA is faster in delivering its solution
for small number of flows, see
Figures~\ref{fig:smallPerform}(d)-\ref{fig:geantPerform}(d), and in
such cases feasibility is not an issue for TSA.  Note that the
difference in time can be very significant.  For Geant network and 20
flows, for example, CGA needs almost 50~seconds, whereas the solution
time of TSA(1x) is only a couple of seconds. The solution of the
latter is not optimal, however the gap is quite small.  For large
number of flows, however, CGA clearly outperforms TSA in time.  For
MFA, the solution time is very attractive -- it is in order of one or even two magnitudes faster than TSA and CGA. However MFA has the
highest optimality gap.

As the overall observation, CGA outperforms TSA and MFA, in delivering
optimum with zero failure rate. On the other hand, even if neither TSA nor
MFA gives satisfactory results by themselves, they can be used
together with CGA, either to speed up the latter, or to reduce the
overall time when some tolerance of optimality is accepted. These
aspects are examined below.

%-----------------------------small network
\begin{figure}[!h]
\vspace{-5mm}
  \begin{subfigure}[b]{0.43\textwidth}
    \includegraphics[width=\textwidth]{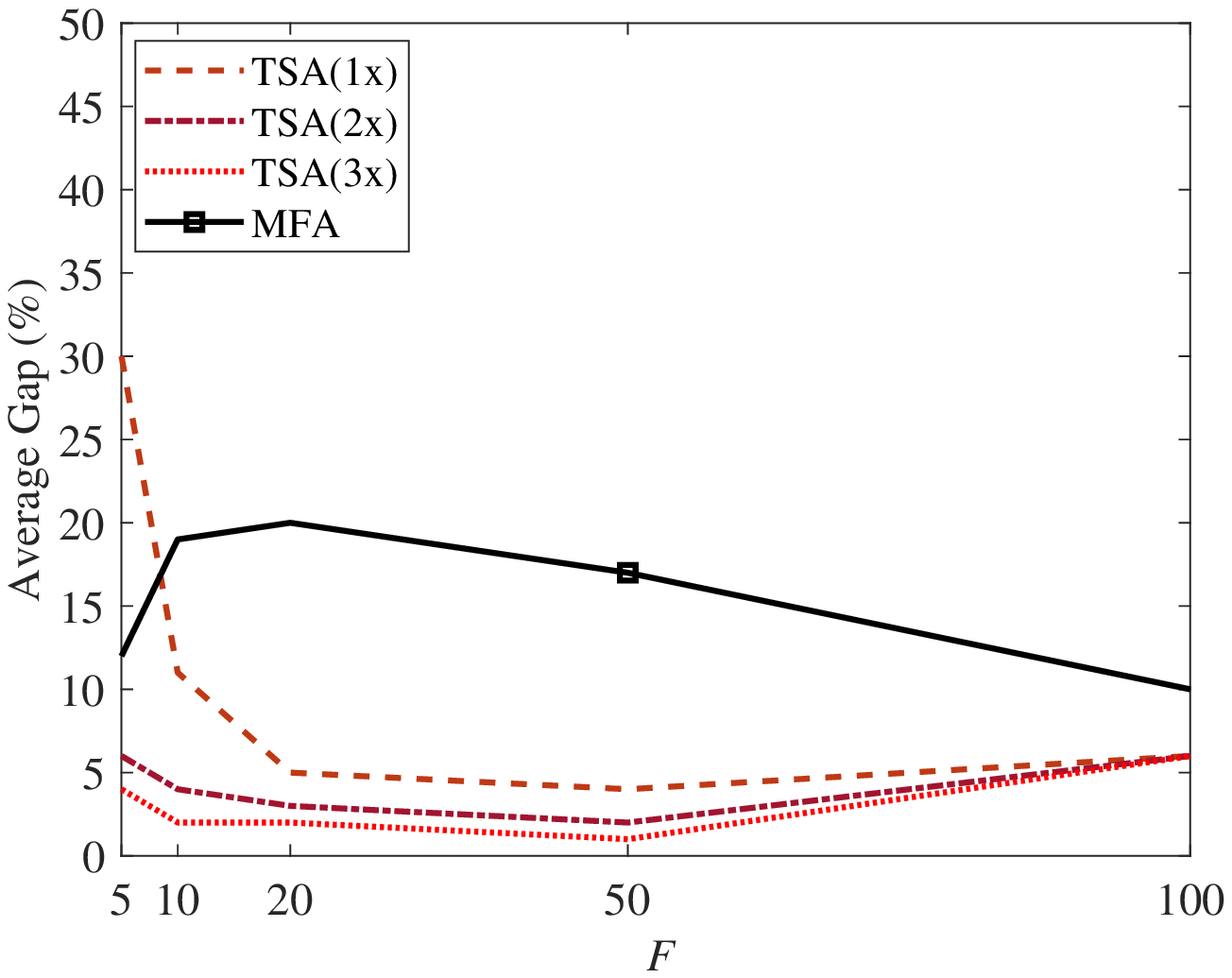}
    \vspace{-5mm}
    \caption{Optimality gap of TSA and MFA.}
    \label{fig:smallgap}
  \end{subfigure}
\begin{subfigure}[b]{0.43\textwidth}
\scriptsize
\centering
\begin{tabular}{c cccc }
\hline\hline
& & TSA (in~\%)&&\\
\cline{2-4}
$F$&$1$x& $2$x& $3$x& MFA (in~\%)\\
\hline\hline
 5  &    0\textbar0     &    0\textbar0    & 0\textbar0    &0\\

10  &    0\textbar0     &    0\textbar0     & 0\textbar0    &0\\

20  &   0\textbar0    &      0\textbar0     & 0\textbar0    &0\\

50  &   0\textbar0    &      0\textbar0     &0\textbar0    &0\\

100 &   0\textbar0    &   0\textbar10    &0\textbar20   &0\\
\hline
\end{tabular}
\vspace{5mm}
    \caption{Failure rate of TSA and MFA.}
    \label{fig:small_fail}
\end{subfigure}
%\end{figure}
%\begin{figure}[H]%\ContinuedFloat
%\vspace{0mm}
\begin{subfigure}[b]{0.43\textwidth}
    \includegraphics[width=\textwidth]{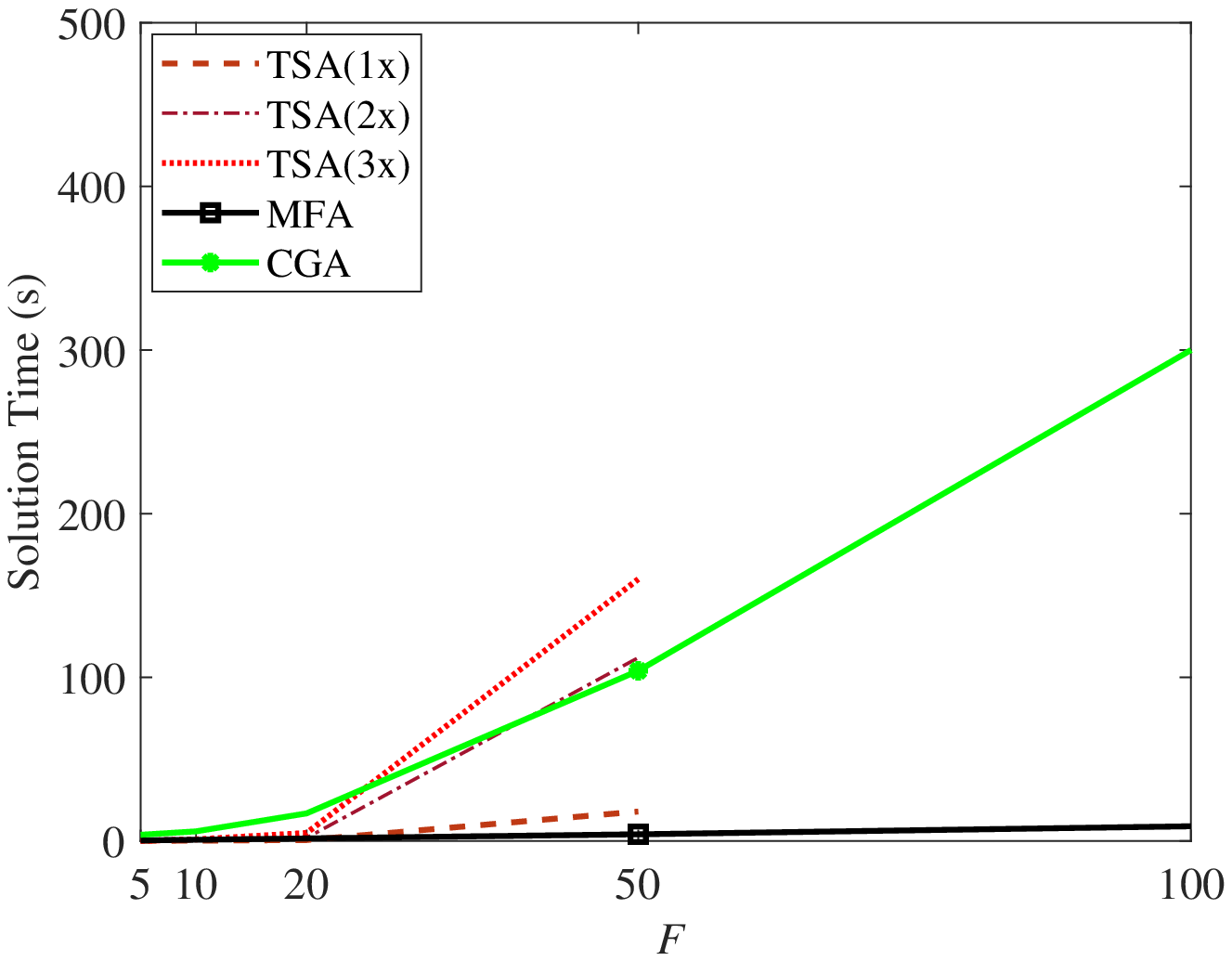}
    \vspace{-3mm}
    \caption{Solution time over all number of flows.}
    \label{fig:small_solTime}
\end{subfigure}
\begin{subfigure}[b]{0.43\textwidth}
    \includegraphics[width=\textwidth]{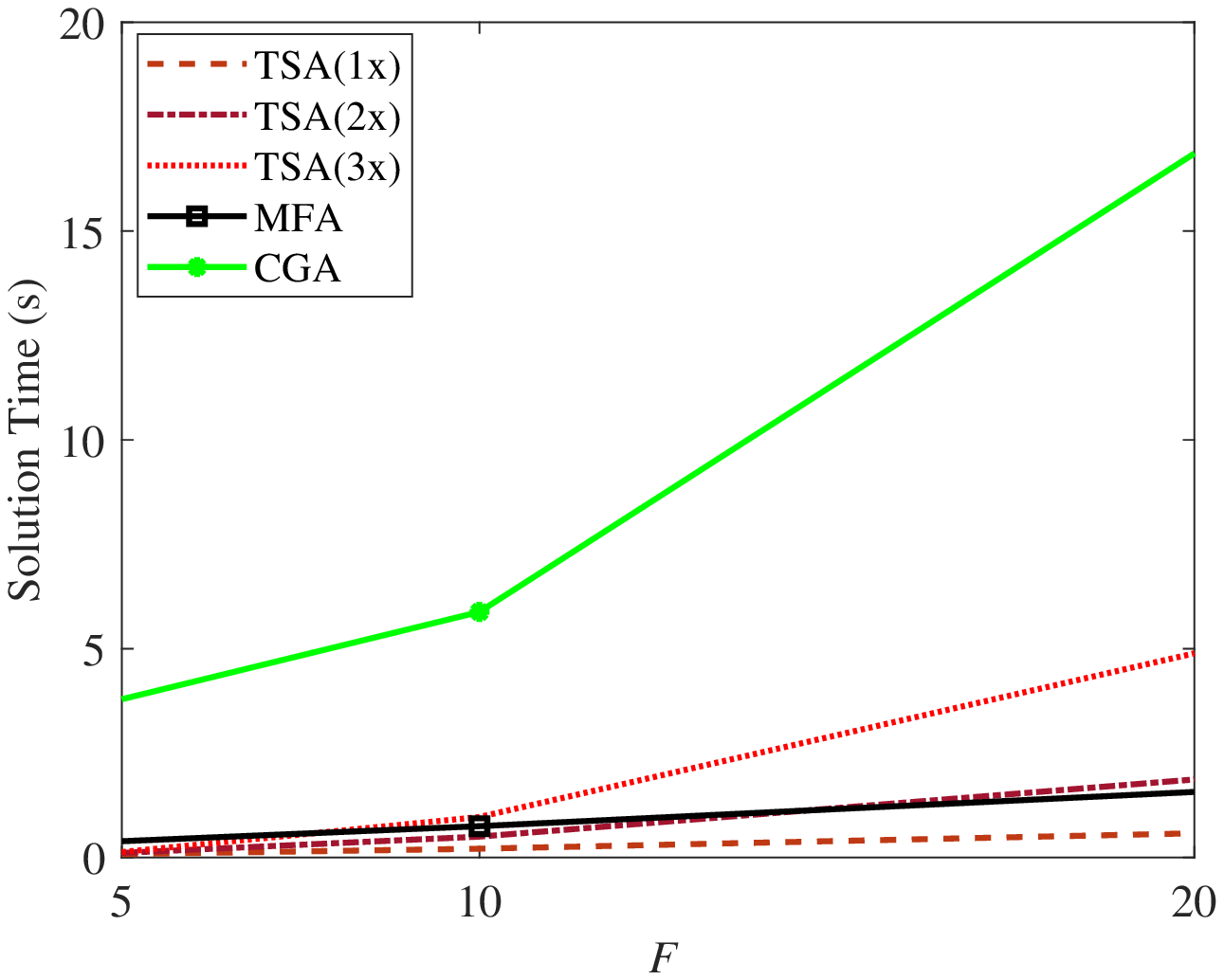}
    \vspace{0mm}
    \caption{Solution time over small number of flows.}
    \label{fig:small_smallSolTime}
\end{subfigure}
  \caption{Performance results for Small network.}
  \vspace{0mm}
  \label{fig:smallPerform}
\end{figure}
\vspace{-3mm}

%-------------------------------softlayer
\begin{figure}[H]
  \begin{subfigure}[b]{0.43\textwidth}
    \includegraphics[width=\textwidth]{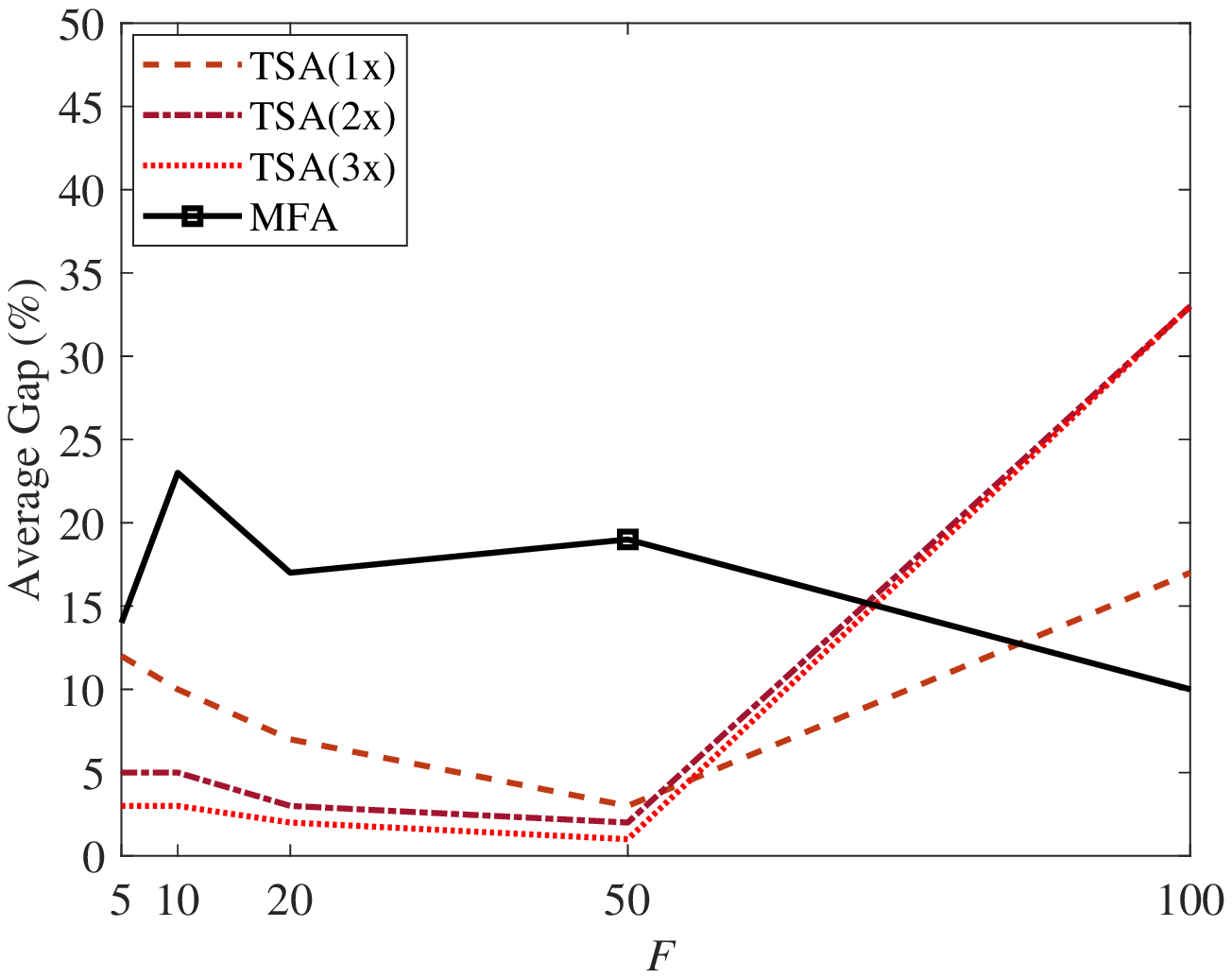}
    \vspace{-5mm}
    \caption{Optimality gap of TSA and MFA.}
    \label{fig:softgap}
  \end{subfigure}
\begin{subfigure}[b]{0.43\textwidth}
\scriptsize
\centering
\begin{tabular}{c ccc c }
\hline\hline
& & TSA (in \%)&&\\
\cline{2-4}
$F$&$1$x& $2$x& $3$x& MFA (in \%)\\
\hline\hline
 5  &    0\textbar0     &    0\textbar0    & 0\textbar0    &0\\

10  &    0\textbar0     &    0\textbar0     & 0\textbar0    &0\\

20  &   0\textbar0    &      0\textbar0     & 0\textbar0    &0\\

50  &   0\textbar0    &      0\textbar0     &0\textbar30    &0\\

100 &   0\textbar80    &   0\textbar90    &0\textbar90   &0\\
\hline
\end{tabular}
\vspace{7mm}
    \caption{Failure rate of TSA and MFA.}
    \label{fig:soft_fail}
\end{subfigure}
%\end{figure}
%\begin{figure}[H]%\ContinuedFloat
%\vspace{-12mm}
    \begin{subfigure}[b]{0.43\textwidth}
    \includegraphics[width=\textwidth]{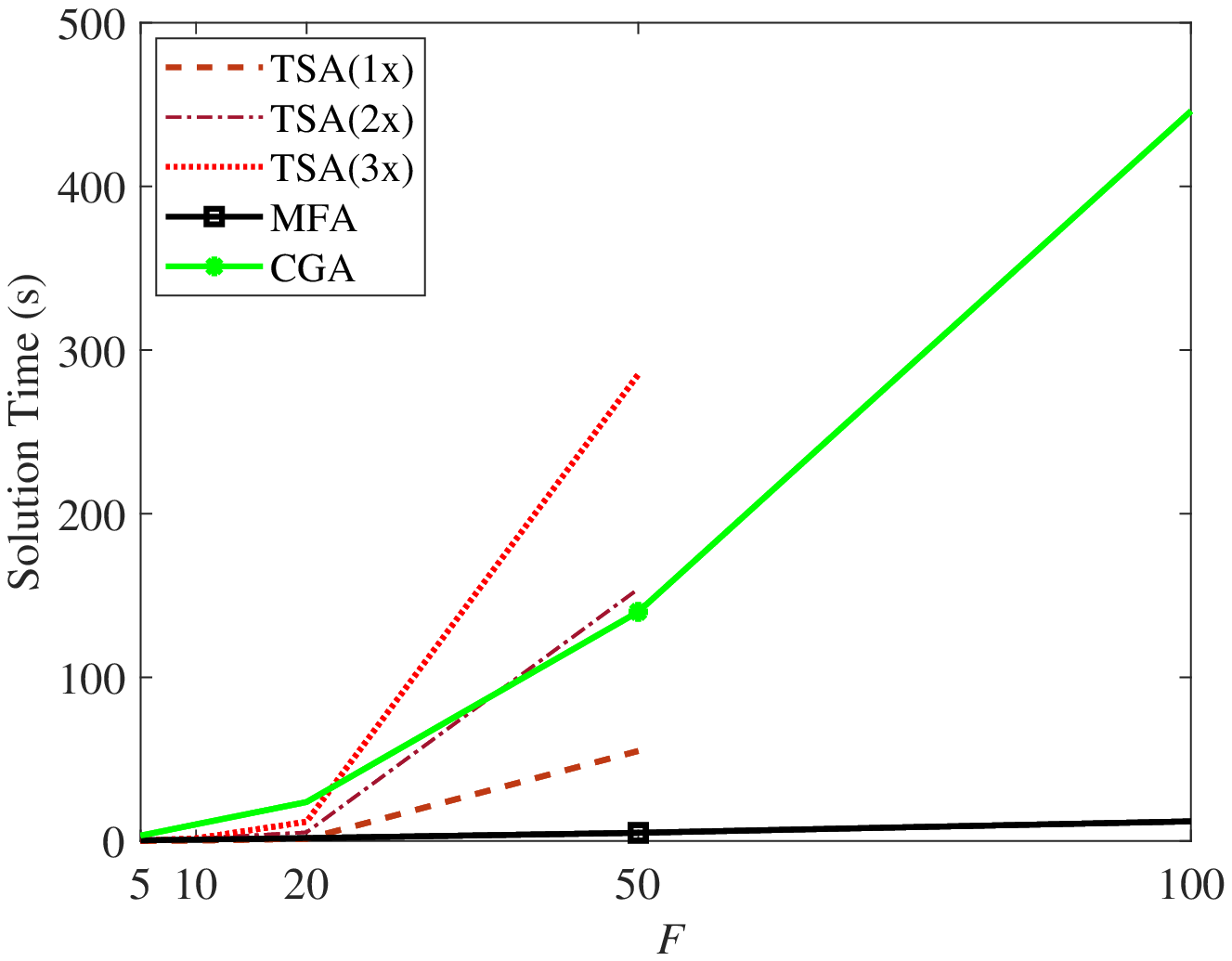}
    \caption{Solution time over all number of flows.}
    \label{fig:soft_solTime}
  \end{subfigure}
      \begin{subfigure}[b]{0.43\textwidth}
    \includegraphics[width=\textwidth]{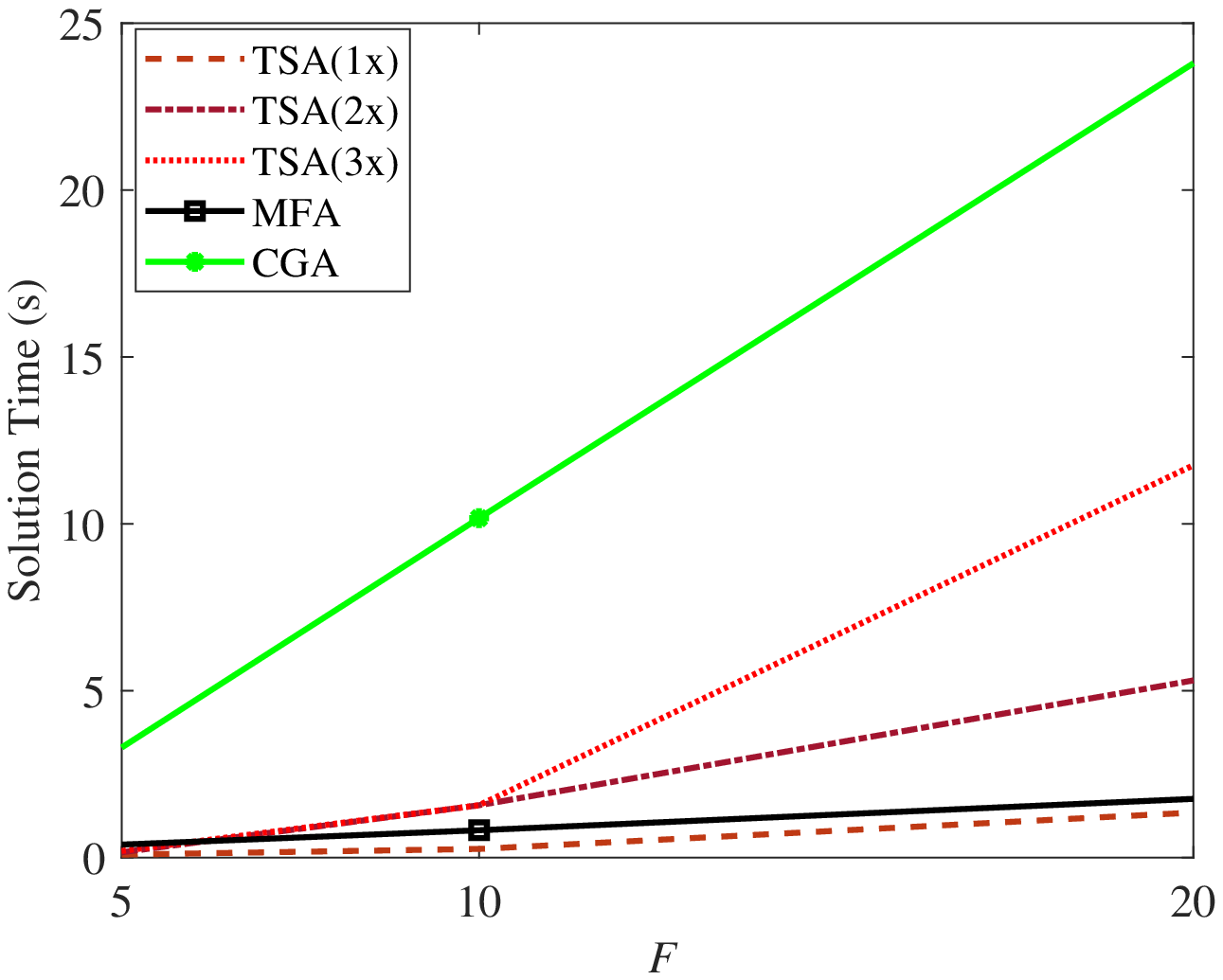}
    \caption{Solution time over small number of flows.}
    \label{fig:soft_smallSoltime}
  \end{subfigure}
  \caption{Performance results for Softlayer network.}
  \vspace{-2mm}
  \label{fig:softPerform}
\end{figure}

%-------------------------------Geant
\begin{figure}[H]
  \begin{subfigure}[b]{0.43\textwidth}
    \includegraphics[width=\textwidth]{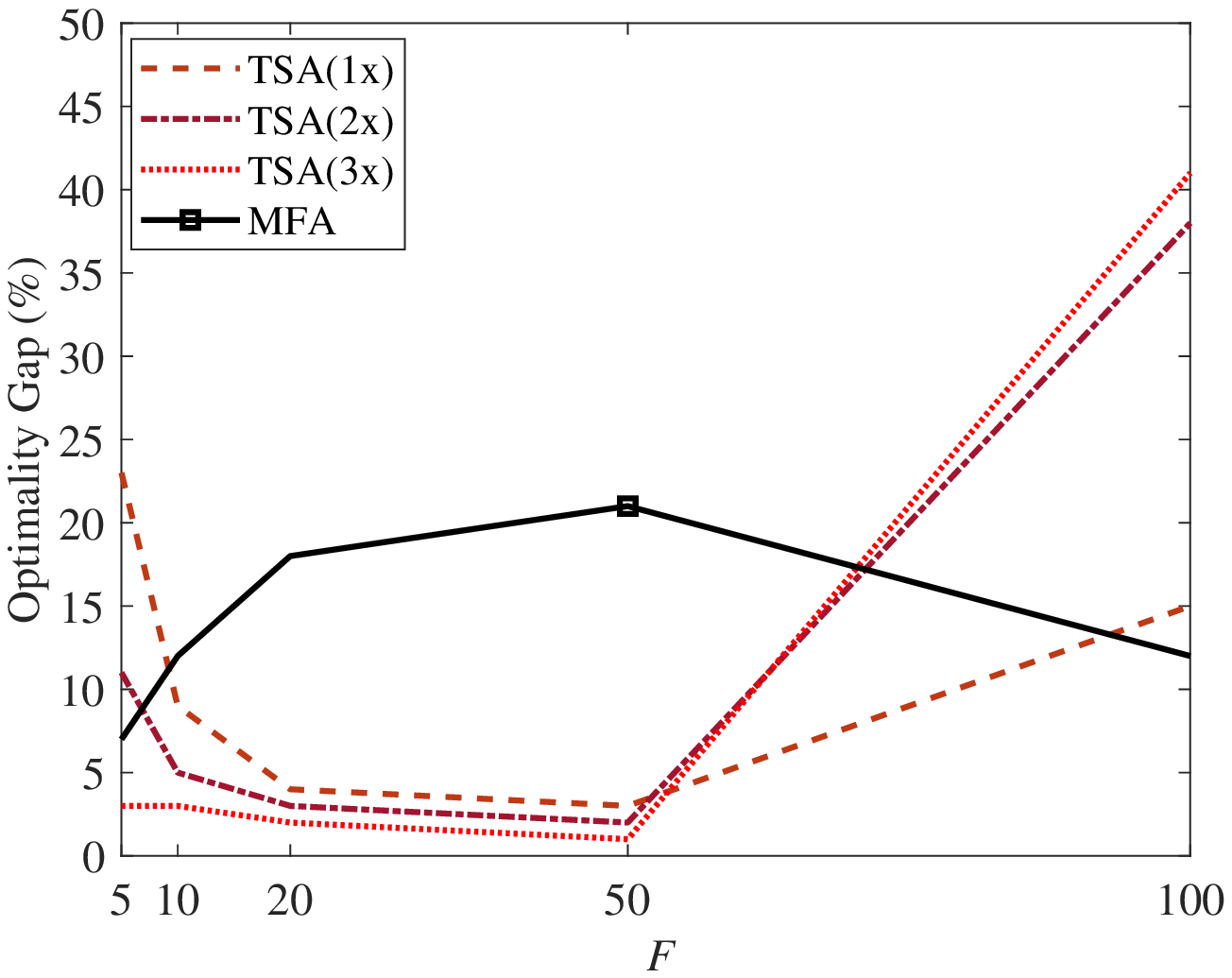}
    \caption{Optimality gap of TSA and MFA.}
    \label{fig:geantgap}
  \end{subfigure}
  \begin{subfigure}[b]{0.43\textwidth}
\scriptsize
\centering
\begin{tabular}{c ccc c }
\hline \hline
& & TSA~(in~\%)&&\\
\cline{2-4}
$F$&$1$x& $2$x& $3$x&MFA~(in~\%) \\
\hline \hline
 5  &    0\textbar0     &    0\textbar0    & 0\textbar0    &0\\

10  &    0\textbar0     &    0\textbar0     & 0\textbar0    &0\\

20  &   0\textbar0    &      0\textbar0     & 0\textbar0    &10\\

50  &   0\textbar0    &      0\textbar20     &0\textbar20    &10\\

100 &   0\textbar30    &   0\textbar80    &0\textbar90   &10\\
\hline
\end{tabular}
\vspace{7mm}
    \caption{Failure rate of TSA and MFA.}
    \label{fig:geant_fail}
\end{subfigure}
%\end{figure}
%\vspace{-12mm}
%\begin{figure}[H]%\ContinuedFloat
 \begin{subfigure}[b]{0.43\textwidth}
    \includegraphics[width=\textwidth]{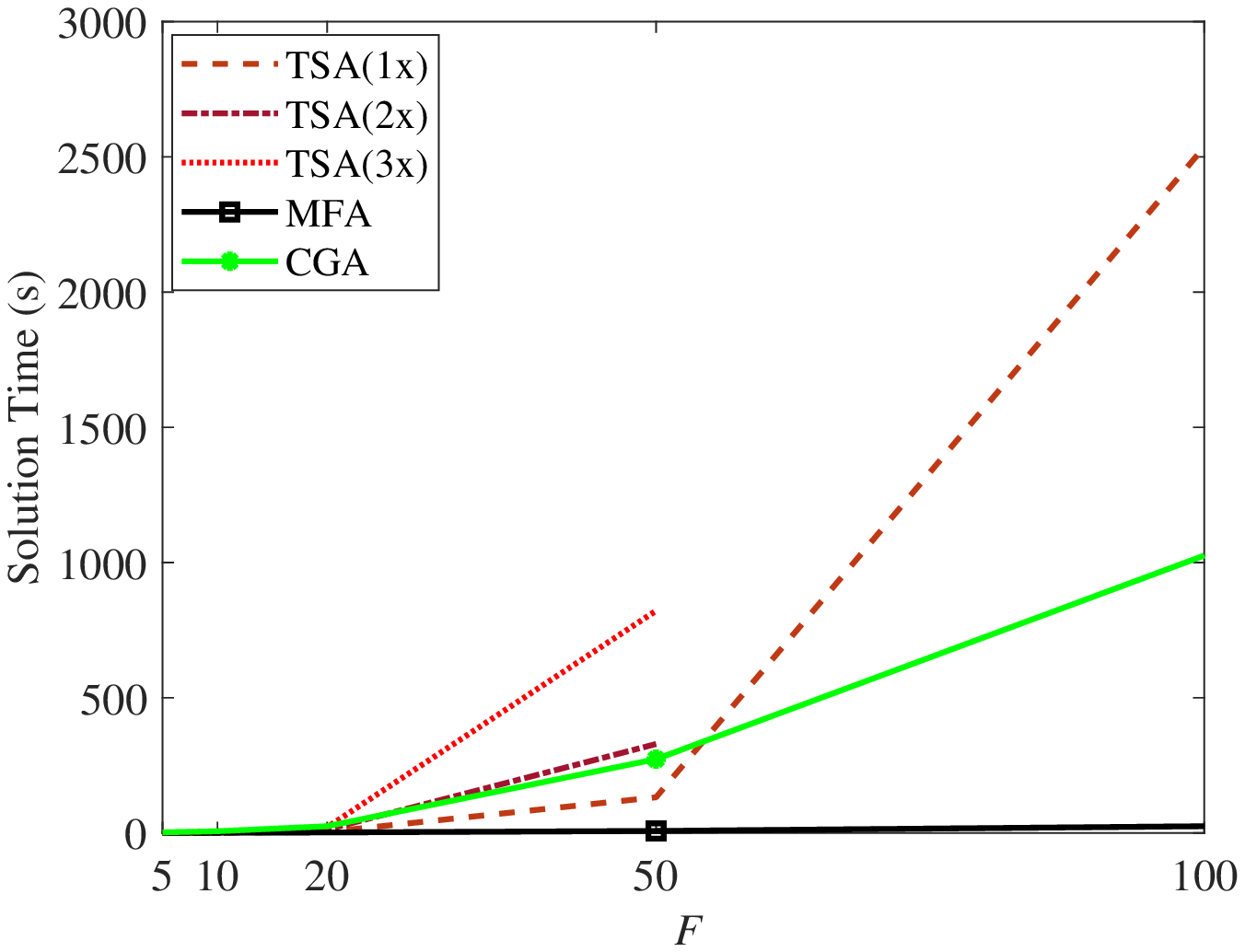}
    \caption{Solution time over all number of flows.}
    \label{fig:geant_solTime}
  \end{subfigure}
 \begin{subfigure}[b]{0.43\textwidth}
    \includegraphics[width=\textwidth]{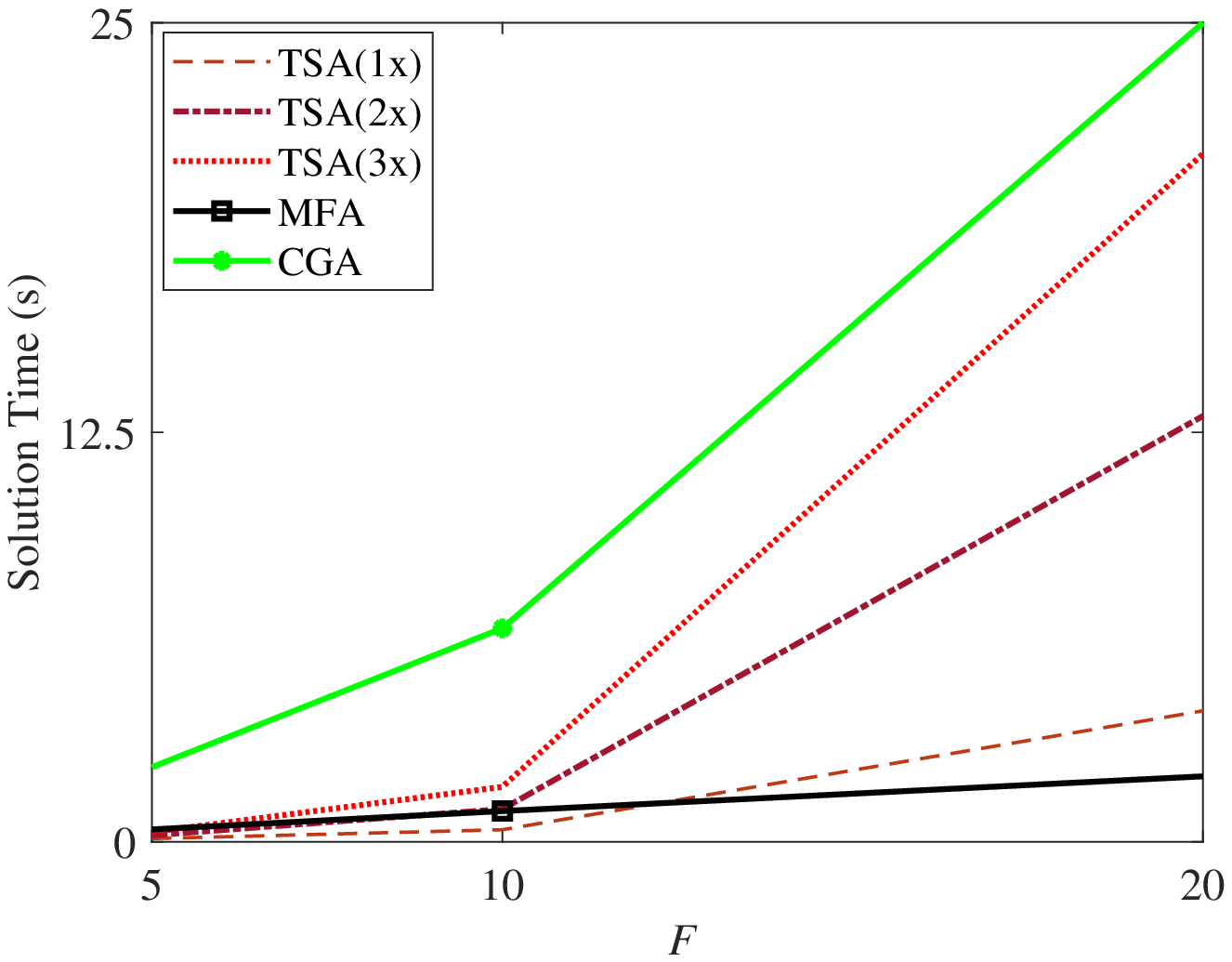}
    \caption{Solution time over small number of flows.}
    \label{fig:geant_smallSolTime}
  \end{subfigure}
   \caption{Performance results for Geant network.}
   \vspace{-2mm}
  \label{fig:geantPerform}
\end{figure}
\vspace{-1mm}

%-----------------------------------------------------------soltime

\subsection{Combining CGA with MFA and TSA}

In this section, we consider using CGA in conjunction with MFA and/or
TSA. Recall that, to reach the initial feasible solution (IFS), CGA runs
Phase~I in which the objective is to minimize a penalty function
of infeasibility.  We refer to the process of reaching optimality,
starting from the IFS, as Phase~II.

First, we observe that when MFA does deliver a feasible solution, the
time required is much shorter than that of Phase~I, and the solution
is generally of better quality than the IFS of CGA.
Hence, to speed up CGA, we can use the solution of MFA as an IFS
for CGA.  The time of Phase~I is reduced therefore to the running time
of MFA. We refer to this combination as MFA-CGA.  Second, as discussed
in Section~\ref{sec:timeSlicing}, the LP relaxation of TSA (rTSA) can
be used to provide a lower bound of the global optimum. Hence, if a
tolerance on the optimality gap is present, the bound from rTSA can be
used to terminate CGA as soon as the current solution of CGA meets the
tolerance parameter with respect to the lower bound. As a result, the
time necessary for Phase~II is reduced. This is particularly useful
for large-scale instances.  We refer to this combination as
rTSA-CGA($p$), where $p$ represents the percentage value of the
tolerance parameter. Finally, combining both MFA and rTSA with CGA
is denoted by MFA-rTSA-CGA($p$).

Table \ref{table:savingtime} summarizes our findings. We show the
average solution times of Phase~I and Phase~II of CGA,
and the time reductions achieved.
For MFA-CGA, the time reduction of Phase~I is the relative difference of the
computing time of Phase~I of CGA and the running time of MFA.  For
rTSA-CGA, the time reduction reported for Phase~II is the relative difference
of the computing time of Phase~II of CGA and the sum of the time for
solving rTSA and the time needed for Phase~II before satisfying the
10\% optimality tolerance. Note that MFA is not used for Phase~I in
the results of rTSA-CGA.

\begin{table}[ht!]
	\small
	\centering
\setlength\tabcolsep{2 pt}
	\begin{tabular}{c c c c c c c c}
		\hline
		\hline
		& &\multicolumn{2}{c}{Average Solution Time of CGA (s)}&& \multicolumn{3}{c}{Average Time Reduction (in \%)}\\
		\cline{3-4}
		\cline{6-8}
		Network & $~~~~F~~~~$ & Phase~I & Phase~II && MFA-CGA~ & rTSA-CGA($10\%$)~ & MFA-rTSA-CGA($10\%$) \\
		  & & & && Phase I& Phase II &  Total\\
		\hline
		\hline
		& 5   &0.95     &2.07     &&58    &66    &74 \\
		& 10  &2.32     &3.22     &&64    &48    &57 \\
		Small       & 20  &6.97     &9.21     &&80    &54    &76 \\
		& 50  &35.25    &60.98    &&88    &58    &88 \\
		& 100 &169.05   &126.42   &&95    &52    &96 \\
		\hline
		\hline
		& 5   &0.69     &0.97     &&51  &48 &52 \\
		& 10  &2.16     &6.07     &&65  &57 &67 \\
		Softlayer   & 20  &7.10     &14.10    &&78  &57 &81 \\
		& 50  &49.28    &84.90    &&90  &55 &86 \\
		& 100 &262.87   &187.18   &&95  &44 &95 \\
		\hline
		\hline
		& 5   &0.97     &0.74      &&65  &34 &75 \\
		& 10  &2.25     &3.25      &&69  &49 &75 \\
		Geant       & 20  &9.95     &13.60     &&82  &57 &81 \\
		& 50  &134.45   &136.05    &&93  &54 &86 \\
		& 100 &606.41   &377.93    &&95  &63 &94 \\
		\hline
	\end{tabular}
	\caption{Average solution time of CGA and average time reduction.}
	\label{table:savingtime}
\end{table}

We observe that both phases of CGA require similar time (within order
of magnitude). Therefore time reduction is of importance for both.

MFA-CGA yields for very significant time reduction of
Phase~I, from $50\%$ with few flows up to $95\%$ with 100 flows.  The
time reduction increases with respect to the number of flows. The
reason is that the solution time of MFA grows with a much slower rate
than that of running Phase~I of CGA in its original design.  We
remark that the solution of MFA is typically better than the IFS of
the original CGA. Hence, even though not shown in the table,
MFA-CGA also gives some time reduction in Phase~II.

Having an optimality tolerance of $10\%$ and using the bound of rTSA,
i.e., rTSA-CGA($10\%$), leads to significantly less time in Phase~II.
The time reduction ranges between $34\%$ and $66\%$.  We observe no
correlation of the time reduction with the network size or the number of
flows. For a majority of cases, the percentage values are lower than those
for Phase~I achieved via MFA-CGA.  This is however counter-balanced,
to some extent, by the fact that Phase~II sometimes takes more time
than Phase~I in CGA, and in such a case the reduction of the former has more impact.

The combination MFA-rTSA-CGA($10\%$) gives a substantial reduction of
the total solution time, starting from $50\%$ for $5$ flows and
reaching $96\%$ for $100$ flows. Hence the approach is useful for
dealing with large-scale scenarios where the problem size is mainly due to
the number of flows.

\begin{figure}[!h]
   \begin{subfigure}[b]{0.43\textwidth}
    \includegraphics[width=\textwidth]{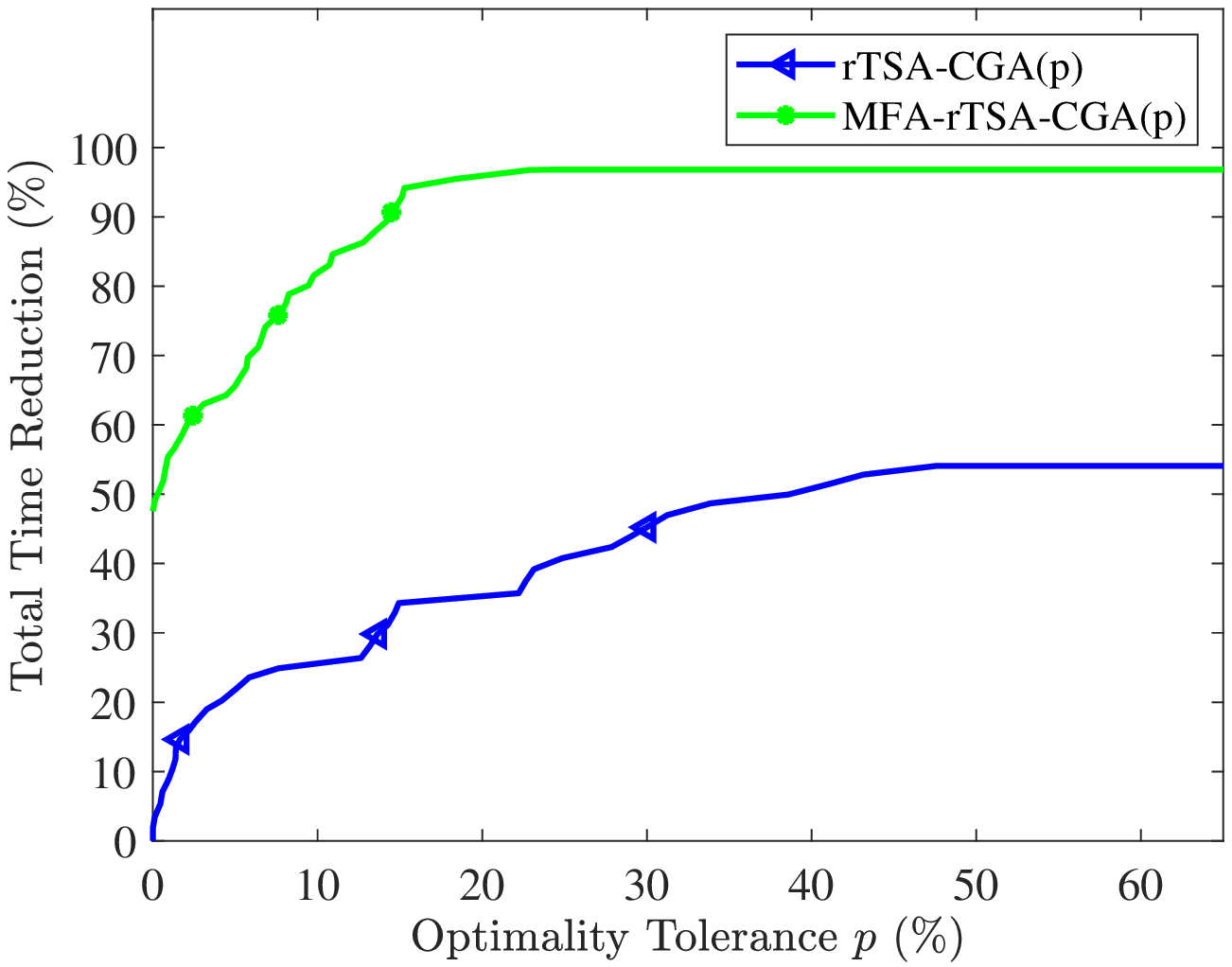}
    \caption{Softlayer network with 50 flows.}
  \end{subfigure}
     \begin{subfigure}[b]{0.43\textwidth}
    \includegraphics[width=\textwidth]{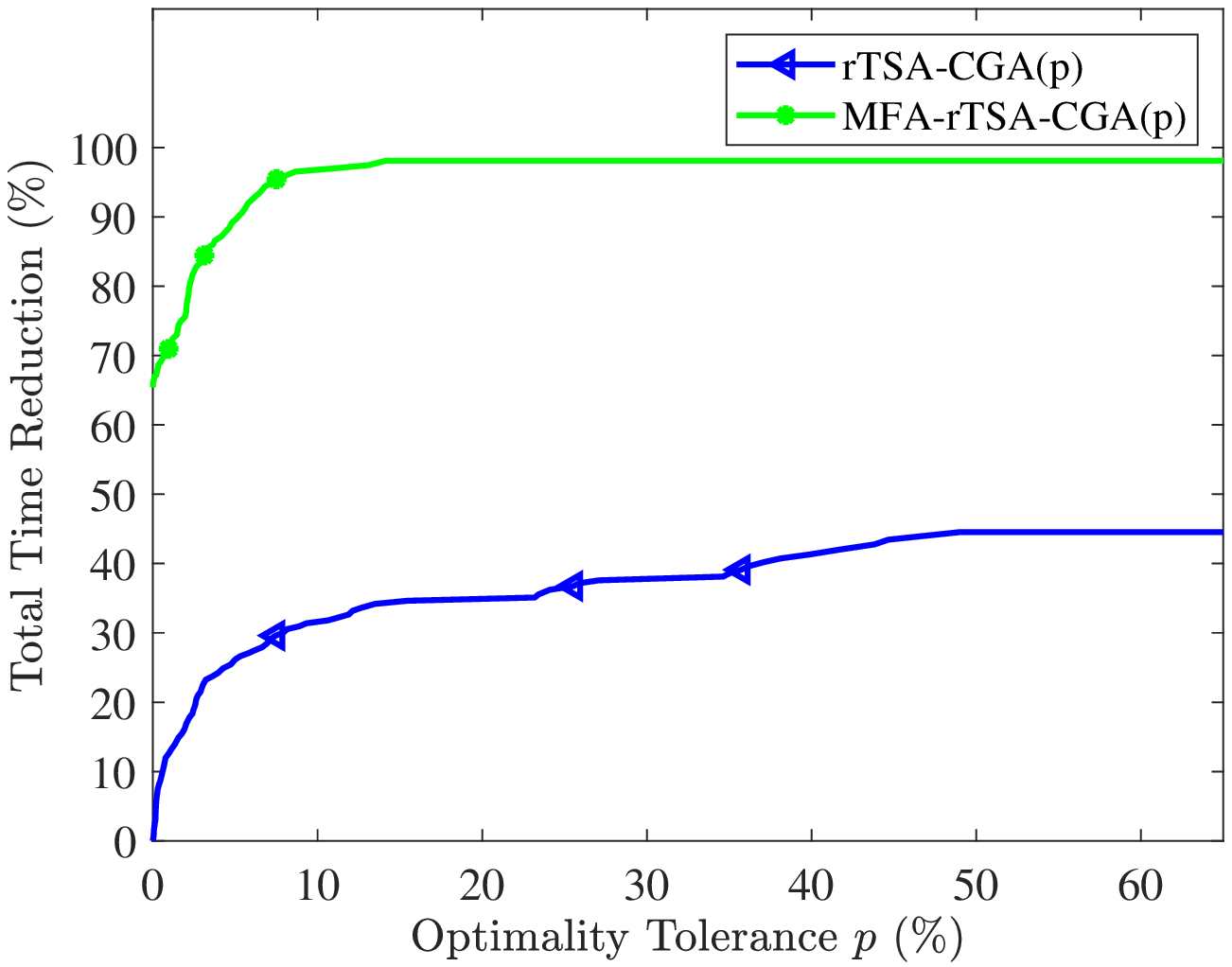}
    \caption{Geant network with 100 flows.}
  \end{subfigure}
   \caption{Times reduction by rTSA-CGA($p$) and MFA-rTSA-CGA($p$)
with respect to $p$. The time reduction is calculated in relation to the
overall time of CGA.}
   \vspace{0mm}
  \label{fig:savingtimes}
\end{figure}

Figure~\ref{fig:savingtimes} illustrates how the time reduction by
rTSA-CGA($p$) and MFA-rTSA-CGA($p$) varies as a function of optimality
tolerance parameter $p$ for two representative instances. Obviously,
higher tolerance of optimality gap means larger time reduction. This
growth stops at the point where the optimality gaps of the solution
provided by MFA and the (original) IFS of CGA are within the
tolerance, for MFA+rTSA+CGA($p$) and rTSA+CGA($p$), respectively.
As expected, MFA-rTSA-CGA($p$) offers clearly larger reduction of
the overall time than rTSA-CGA($p$) for all $p$-values.  This is
because the latter focuses on Phase~II only.  In MFA-rTSA-CGA($p$), both
phases are shortened in time, and the reduction on Phase~II is a joint
effect of rTSA via bounding and of MFA that provides a better IFS than
what CGA had. Using MFA-rTSA-CGA($p$), the CGA time is reduced by
one order of magnitude for $p=10\%$ and $p=15\%$ for the two scenarios.

\section{Conclusions}
\label{sec:concolusion}

We have investigated flow routing and scheduling with the presence of
deadline constraints and capacity allocation with discrete units.  In
addition to examining problem complexity, we have considered three
solution approaches, TSA, CGA, and MFA. They use time slicing, problem
reformulation and column generation, and multicommodity maximum flow,
respectively. Among them, CGA delivers global optimum.  Our
performance evaluation shows that, with tight deadlines, TSA and MFA
often fail to obtain a feasible solution. When the deadlines are less
stringent, TSA provides solution faster than CGA if the number of
flows is relatively small, though the optimality gap is
non-negligible.  MFA is fastest among the three approaches when
infeasibility is not an issue. However, it also has the largest
optimality gap. Overall, CGA represents a viable solution approach for
global optimality. Moreover, there are relevant use cases of the
solution from MFA and the bound from TSA, in the context of CGA,
particularly for large-scale instances.

Further work includes extensions of the problem, to which adaptions of
the proposed approaches will be studied. One specific case is the data backup
problem studied in \cite{Yao2015} that applies TSA for problem
solution.  In this problem, a source has multiple candidate
destinations for data backup. At any time, a source may choose at most
one destination, and a destination may be used by at most one
source. No deadline is present.  We remark that CGA can be adapted by
setting a large value for all flow deadlines, and tailoring the
subproblem formulation to no common source or destination is used by
the flows with positive rates in the rate vectors. Some preliminary
results indicate the relative performance between TSA and CGA is
coherent with those in Section~\ref{sec:performance_evaluation} in
terms of solution time and optimality gap. Extensive performance
evaluation as well as extensions to other related problems
are subject to further study.

\begin{appendices}

\section{}
\label{sec:notation}

\begin{table}[H]
\renewcommand{\arraystretch}{1.1}
\begin{tabular}{clcl}
\hline\hline
Symbol&Definition&Symbol&Definition\\
\hline \hline
 $\mathcal{G}$                           &  network &
 $\mathcal{N}$                          & set of nodes                        \\
 $N$                                     &   number of nodes&
 $\mathcal{A}$                              &   set of arcs
\\
 $A$                                      &   number of arcs&
 $(i,j)$                                      &  arc from $i$ to $j$
\\
$c_{ij}$ &    capacity of arc $(i,j)$   &
 $\mathcal{U}$                              &  set of discrete capacity units
\\
 $u_i$                                      &  size of capacity unit $i$ & $\mathcal{F}$                                &  set of flows
\\
  $F$                             &   number of flows&
  $o_f$                                      &   origin of flow $f$  \\
  $d_f$ & destination of flow $f$&
 $t_f$                                      &   deadline of flow  $f$
\\
$s_f$&size of flow $f$
\\  \hline
\end{tabular}
\caption{Basic notation.}
\end{table}

\vspace{-5mm}
\begin{table}[H]
\renewcommand{\arraystretch}{1.1}
\centering\singlespacing
\begin{tabular}{c p{15.3cm}}

\hline\hline
Symbol&Definition\\
\hline\hline
 $\mathcal{T}$                       &  set of time slices \\
 $\tau$                              &  a time slice   \\
 $|\tau|$                                 & length of time slice $\tau$\\
 $y_{fij}^\tau$                      & continuous variable, denoting the
rate of flow $f$ on arc $(i,j)$ in time slice $\tau$
\\
 $r_f^\tau$                          & continuous variable, indicating the
end-to-end rate of flow $f$ in time slice $\tau$ \\
 $w_\tau$                           & binary variable that takes value
one if any flow is scheduled in time slice $\tau$, and zero otherwise
\\
$z_{fij}^{m,\tau}$ &  integer variable that denotes
the number of times that flow $f$ uses capacity unit $u_m$ on arc $(i,j)$ in time slice $\tau$
\\ \hline
\end{tabular}
\caption{Notation related to the time slicing approach (TSA).}
\end{table}

\begin{table}[H]
\renewcommand{\arraystretch}{1.1}
\begin{tabular}{cl }
\hline \hline
Symbol&Definition\\
\hline \hline
 $\chi$                     & overall completion time
\\ \
 ${\bf{r}^*}$               & optimum rate vector                        \\
 $\delta^*$                 & the minimum time required to complete one of the flows
\\
 $f^*$                     & the flow that completes first among the flows
\\ \hline
\end{tabular}
\caption{Notation related to the max-flow based algorithm (MFA).}
\end{table}

\begin{table}[H]
\renewcommand{\arraystretch}{1.1}
\begin{tabular}{c p{15.3cm}}
\hline \hline
Symbol&Definition\\
\hline \hline
$\mathcal{V}$   &  set of rate vectors\\
 $v$             & an end-to-end rate vector in form of $[r_1^v, r_2^v,\dots,r_F^v]^T$ \\
 $r_f^v$         & the rate of flow $f$ in vector $v$ \\
 $\mathcal{V}_f$ &  set of rate vectors in which flow $f$ has positive end-to-end rate: $\{v \in \mathcal{V}|r_f^v>0\}$\\
 $y_{fij}$       &continuous variable, indicating the  rate of flow $f$ on arc (i,j) \\
 $r_f$           &continuous variable, denoting the rate of flow $f$ \\
 $x_v$           &continuous variable, denoting the time duration that vector $v$ is scheduled \\
 $z_{fij}^m$     &   integer variable, denoting the number of times that capacity
unit $u_m$ is used by flow $f$ on arc $(i,j)$\\
 $f^+$           & the first flow index with positive rate in a rate vector\\
 $\lambda_f^*$     &  optimal dual value corresponding to the demand constraint of flow $f$\\
 $\pi_f^*$         &optimal dual value corresponding to deadline constraint of flow $f$\\
$\mathcal{Q}_f$ &set of rate vectors in which $f$ is the first flow of positive rate:
$\{v|v \in \mathcal{V}_f \setminus (\mathcal{V}_0 \cup \dots\cup \mathcal{V}_{f-1})\}$
\\
 $p_f$ & a time point by which the data transmission of flow $f$ is completed    \\ \hline
\end{tabular}
\caption{Notation related to the column generation algorithm (CGA).}
\end{table}

\begin{table}[H]
\renewcommand{\arraystretch}{1.1}
%\begin{tabular}{c m{0.9\linewidth}}
\begin{tabular}{c p{15cm}}
\hline \hline
Symbol & Definition\\
\hline \hline
 $\alpha$        &  deadline factor
\\
 $e_f$          &  the earliest possible completion time of flow $f$ if all capacity of the network is given to this flow
  \\
 $1$x            &   use of $F$ time slices in TSA
\\
 $2$x            &   use of $2F$ time slices in TSA
\\
 $3$x            &   use of $3F$ time slices in TSA
\\ \hline
\end{tabular}
\caption{Notation related to the simulation setup.}
\end{table}

\section{}\label{NPhard1}

We adopt a polynomial-time reduction from the
3-satisfiability (3-SAT) problem that is NP-complete
\cite{garey1979computers}. Consider any 3-SAT instance with $m$
Boolean variables $n_1, n_2, \dots, n_m$, and $k$ clauses $c_1, c_2,
\dots, c_k$. A variable or its negation is called a literal. Denote by
$\hat{n}_i$ the negation of $n_i$, $i=1,2, \dots, m$. Each clause
consists of a disjunction of exactly three different literals, e.g.,
$n_1 \lor n_2 \lor \hat{n}_3$.  We use $\mathcal{Z}_i$ and
$\hat{\mathcal{Z}}_i$ to denote the sets of clauses in which variable
$n_i$ and its negation $\hat{n}_i$ appear, respectively. Also,
$\text{Z}_i$ and $\hat{\text{Z}}_i$ are used to denote their
respective cardinalities. We assume that no clause contains both a
variable and its negation, and any literal appears in at least one
clause and at most $k-1$ clauses. For any arc $(u,v)$ that we define
below, $u$ and $v$ are referred to as head and tail of the arc,
respectively. Similarly, the first and last node of a path are called
head and tail of the path.  We construct an IFDP instance as
follows. The number of nodes equals $N=\sum_{i=1}^{m}(\text{Z}_i
+\hat{\text{Z}}_i+2)+3k$.  The first $\sum_{i=1}^{m}(\text{Z}_i
+\hat{\text{Z}}_i+2)$ nodes are referred to as literal nodes, and the
last $3k$ nodes are referred to as clause nodes. The number of arcs is
$A=\sum_{i=1}^{m}(\text{Z}_i +\hat{\text{Z}}_i+2)+(6k+1)$; the nodes
indexed by the numbers in the two pairs of parentheses are referred to
as literal and clause arcs, respectively. The capacity of all arcs
equals one, which is also the unit in capacity allocation. The number
of flows is $F=m+k$, referred to as literal and clause flows,
respectively. The deadline of all flows is one time unit, i.e.,
$t_i=1$ for $i=1,\dots m+k$. The size of each flow is one, i.e.,
$s_i=1$ for $i=1,\dots,m+k$.

\begin{figure}[!ht]
\centering
\captionsetup{justification=centering}
\includegraphics[scale=0.4]{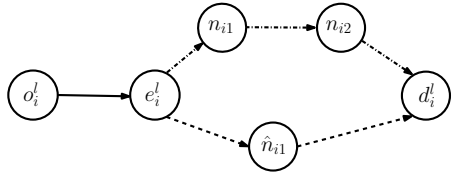}
\begin{center}
\vspace*{-1em}
\caption{The two disjoint literal paths for literal flow $i$ with $\text{Z}_i=3$ and $\hat{\text{Z}}_i=2$.}
\label{fig:literalpaths}
\vspace*{-1em}
\end{center}
\end{figure}

For literal flow $i$, the origin and destination nodes are denoted by
$o_i^l$ and $d_i^l$, respectively. There are exactly two possible
paths for literal flow $i$, referred to as literal paths. See
\figurename~\ref{fig:literalpaths}.
Arc $(o_i^l,e_i^{l})$ appears in both paths, and shown
by the solid line. The first literal path corresponds to variable $n_i$, and
consists of the solid arc and dash-dotted arcs. The first and last
nodes of this path are $o_i^l$ and $d_i^l$, respectively. The
remaining $\text{Z}_i$ nodes between them are
$e_i^{l},n_{i1},\dots,n_{i(\text{Z}_i-1)}$. Thus, there are
$\text{Z}_i+1$ arcs in the path where the first arc is the common arc
$(o_i^l,e_i^{l})$ and each of the remaining arcs represents an
occurrence of variable $n_i$ in the clauses. The second path follows a
similar construction and consists of arc $(o_i^l,e_i^{l})$ and dashed
arcs. The nodes between the origin and destination are
$e_i^{l},\hat{n}_{i1},\dots,\hat{n}_{i(\hat{\text{Z}}_i-1)}$, giving
$\hat{\text{Z}}_i+1$ arcs. The arcs, except the first one, represent
the $\hat{\text{Z}}_i$ occurrences of $\hat{n}_i$ in the clauses. The
above construction is repeated for every literal flow, without any
overlap between the elements defined for a literal flow and
those defined for any other literal flow.

For clause flow $j$, we introduce origin node $o_j^c$ and destination
node $d_j^c$. There are three possible paths referred to as the clause
paths of $j$. Each path corresponds to a literal in the
3-SAT clause and consists of exactly four arcs. The first arc is
common among the three paths and originates from $o_j^c$ to
$e_j^c$. The second arc is from $e_j^c$ to head of the literal arc
defined earlier to represent the occurrence of the literal in this
clause. The third arc is the literal arc itself, and the fourth arc originates
from the tail of the literal arc to $d_j^c$. See
\figurename~\ref{fig:clausepaths} for an illustration, assuming that
clause $j$ contains literal $n_i$, and this is the second occurrence
of the literal. For the other two literals in the clause, two
additional paths are defined similarly. This construction applies to
all the clauses. Hence each clause path has exactly one arc in
common with a literal path.

\begin{figure}[htbp]
\centering
\includegraphics[scale=0.4]{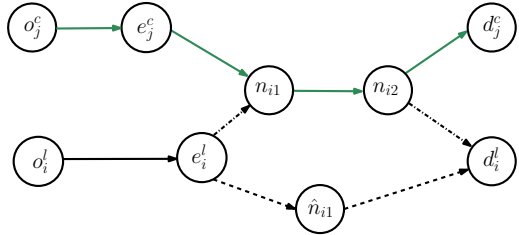}
\begin{center}
\vspace*{-1em}
\caption{One of the three clause paths for clause flow $c_j$.
The corresponding clause is $c_j$, $n_i$ is one of its literals and this is the second
occurrence of the literal.}
\label{fig:clausepaths}
\vspace*{-1em}
\end{center}
\end{figure}

Assume there is a feasible solution for the 3-SAT instance.  For any
variable $n_i$, if it has value true, we route literal flow $i$ on the
literal path defined for $\hat{n}_i$. Otherwise, literal flow $i$ uses
the literal path defined for $n_i$. As a result, if $n_i$ is true, all
the arcs representing the occurrences of $n_i$ are available (i.e.,
has no literal flow) and all the arcs representing the occurrences of
$\hat{n}_i$ are in use by flow $i$. Consider any clause $c_j$ and
suppose it is satisfied by $n_i$ i.e., $n_i$ is true. We route clause
flow $j$ by the clause path defined for this occurrence of $n_i$. This
path is available because the first, second and fourth arcs are always
available to clause flow $j$ as they are specifically defined for $j$,
and the third arc is available since $n_i$ has value true. Using the
resulting routing solution and scheduling this routing for one time
unit yields a feasible solution.

Assume there is a solution for the IFDP instance.  Note that the
deadline for all flows is one time unit, and exactly one unit of
capacity can be used on any arc. Moreover, by construction, the
end-to-end rate of any flow cannot exceed one. These together imply
that the end-to-end rate must equal one at any time point of the IFDP
solution.  Therefore, at any time point, one of the two literal paths
is in use for each literal flow $i$, and one of the three clause paths
is in use by each clause flow $j$.  Thus it is sufficient to consider
the flow routing at any time point in the solution to the IFDP
instance. For any literal flow $i$, if the routing path is the one
corresponding to $n_i$, variable $n_i$ is assigned with value false,
otherwise, it is assigned with value true. This leads to a complete
value assignment for the 3-SAT instance. Now consider any clause flow
$j$, because its end-to-end rate equals one, at least one of the
clause paths is available, meaning that the corresponding literal arc
is not used by the corresponding literal flow, and hence the clause of
the 3-SAT instance is satisfied.  Consequently the answer to the 3-SAT
instance is yes. The proof is then complete by noting
that the reduction is clearly
polynomial.

\section{}
\label{NPhard2}

As for the proof of Theorem \ref{th:NP1}, we construct a reduction
from 3-SAT as follows. There are $F=2m+k$ flows referred to as literal
and clause flows respectively. For convenience, notation for literal and clause
of 3-SAT
reused for the corresponding flows. Thus
$n_1,\hat{n}_1,\dots,n_m,\hat{n}_m$ denote the $2m$ literal flows and
$c_1,\dots,c_k$ denote the $k$ clause flows. The deadlines of literal
and clause flows equal two and one time units, respectively, i.e.,
$t_i=2$, for $i=1,\dots,2m$, and $t_i=1$, for $i=2m+1,\dots,2m+k$. The
size of each flow is one, i.e., $s_i=1$ for $i=1,\dots,2m+k$.

For clause flow $c_j$, we use $o_j^c$ and $d_j^c$ to denote the origin
and destination nodes, respectively.  We define an arc from $o_j^c$ to
an intermediate node $e_j^c$ with capacity three and an arc from
$e_j^c$ to destination node $d_j^c$ with capacity one. This path is
referred to as clause path $c_j$, consisting of two arcs,
see Figure \ref{fig:clausepath_np2}.

\begin{figure}[htbp]
\centering
\captionsetup{justification=centering}
\includegraphics[scale=0.4]{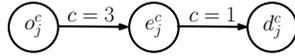}
\begin{center}
\vspace*{-1em}
\caption{Clause path $c_j$.}
\label{fig:clausepath_np2}
\vspace*{-2em}
\end{center}
\end{figure}

For each pair of literal flows $n_i$ and $\hat{n}_i$, we define a set
of four nodes. Two of these nodes are denoted by $o_{i}^l$ and
$\hat{o}_i^l$ representing the origins of the two flows,
respectively. The other two nodes are denoted by $b_i$ and
$b^\prime_i$ for the sake of reference. We define an arc from each of the
two origins $o_{i}^l$ and $\hat{o}_i^l$ to node $b_i$, and an arc from
$b_i$ to $b^\prime_i$. Arc $(b_i, b^\prime_i)$ is hence the bottleneck arc for
flows $n_i$ and $\hat{n}_i$. These three arcs all have capacity one.
Thus, at most one of the two can be routed at a time.  See
\figurename~\ref{fig:4nodes}. All literal flows share a common
destination, denoted by $d^l$.

For each literal flow, we designate one single path for routing,
referred to as literal path. Specifically, for literal flow $n_i$, the
path consists of arcs $(o_i^l, b_i)$, $(b_i, b^\prime_i)$, followed by
traversing through the first arc of all clause paths $c_j$ where $c_j$
$\in$ $\hat{\mathcal{Z}}_i$, and finally to destination node
$d^l$. For this purpose, we need to define some additional arcs as
follows. Denote by $c_{j_1},\dots,c_{j_{\hat{\text{Z}}_i}}$ the clauses
in $\hat{\mathcal{Z}}_i$.  First, we define an arc from $b^\prime_i$
to the first node of clause path $c_{j_1}$, i.e., origin $o_{j_1}^c$.
Next, we define one arc from $e_{j_h}^c$ which is the middle node of clause
path for $j_h$, to $o_{j_{(h+1)}}^c$ which is the origin node of clause path
for $j_{h+1}$, for $h=1,\dots,\hat{\text{Z}}_i-1$. Finally,
we define an arc $o_{j_{\hat{\text{Z}}_i}}^c$ to the destination of
literal flow $d^l$. The capacity of all these new arcs
equals one. For literal flow $\hat{n}_i$, similar construction applies
to ${\mathcal{Z}}_i$.

\begin{figure}[htbp]
\centering
\captionsetup{justification=centering}
\includegraphics[scale=0.4]{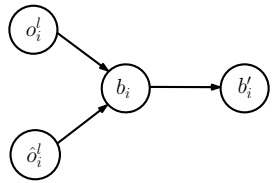}
\begin{center}
\vspace*{-1em}
\caption{The set of four nodes defined for pair $n_i$ and $\hat{n}_i.$ }
\label{fig:4nodes}
\vspace*{-2em}
\end{center}
\end{figure}

By the constructions above, each flow has one single path in the
network, i.e., routing is fully fixed. Moreover, the reduction is
clearly polynomial. We make the following observations.  First, the
literal flows of each pair are mutually exclusive, that is, we can
either schedule $n_i$ or $\hat{n}_i$, but not both
simultaneously. Second, scheduling literal flow $n_i$ has two effects.
First, one capacity unit becomes occupied on the first arc of each
clause path for which the clause is in $\hat{\mathcal{Z}}_i$.  Second,
the first arc of each clause path for which the clause is in
$\mathcal{Z}_i$ has at least one capacity unit available. Scheduling
$\hat{n}_i$ has the opposite effects.  Having these observations in
mind, in the following we show the equivalence of the 3-SAT instance
and the IFDP instance in terms of feasibility.

Suppose there is a yes-answer to the 3-SAT instance. For each pair of
literals $n_i$ and $\hat{n}_i$, we schedule the literal flow with
value true in the first time unit, and the other in the second time
unit. By doing so, all the literal flows are delivered within the
deadline of two time units. Consider any clause $c_j$, at least one of
the literals of this clause holds true.  Therefore, at least one unit
of capacity is available on the first arc of clause path $c_j$.  This,
together with the fact that the second arc of clause path $c_j$ is
defined only for this clause, implies that the clause flow can be
scheduled and delivered within the first time unit.

Conversely, assume we have a feasible solution for the IFDP
instance. Note that the deadline for the clause flows is one time
unit, and for any clause flow, its end-to-end rate can be at most one.
Therefore, the end-to-end rate of a clause flow must equal one
throughout the entire time line of the IFDP scheduling
solution.  Thus, no matter the time point taken, for the first arc of
any clause path, at least one capacity unit is available.  Now
consider any time point of the IFDP solution.  For each pair
of literal flows, if $n_i$ is scheduled, we assign value true to
variable $n_i$. Otherwise we set $n_i$ to be false.  This gives a
true/false assignment of the 3-SAT instance. For the value assignment,
as least one literal of each clause holds true.  This is because at
least one of the three corresponding literal flows is scheduled at the
time moment, as otherwise no capacity of the first arc of the clause
path would be available to the clause flow.  Hence the value
assignment of the 3-SAT instance make all clauses satisfied.  We
remark that, in general, the flow scheduling solution of the
IFDP instance may change over time, however the feasibility
of the IFDP implies that the solution of any time point gives
a feasible solution to the 3-SAT instance.  We can now conclude that
the recognition version of IFDP is NP-complete and its
optimization version is NP-hard.

\section{}
\label{Poly}

Consider an optimal solution to IFDP, and suppose there exists a time
interval of length $\delta$, in which $F'>1$ flows are scheduled and
routed through the common arc.  Without loss of generality, suppose
the flow indexes are $1, \dots, F'$.  Moreover, denote by $\mu_f$ the
amount of capacity of the arc, or equivalently, the end-to-end rate,
allocated to flow $f, f \in \{1, \dots F'\}$.  Hence the amount of demand
delivered in this time interval is $\delta \mu_f, f \in \{1, \dots F'\}$. Note
that none of these flows' deadlines is before the end of the time
interval.

Consider replacing the scheduling solution for the time interval as
follows.  The amount of arc capacity $c = \sum_{f=1}^{F'} \mu_f$ is
allocated to one flow at a time, and this is also the end-to-end rate
of the flow. The order of the flows can be arbitrary. This change is
feasible because the arc is the only common one of all flows, and each
the arc is also the bottleneck for each individual flow.  Moreover,
the scheduling time of flow $f$ is set to $\delta
\mu_f / c$. After the update, for any flow $f$, the amount of demand
delivered remains $\delta \mu_f$ as before.  In addition, the total
scheduling time is $\delta \sum_{f=1}^{F'} \mu_f / c = \delta$, i.e.,
the length of the time interval. Hence the updated solution has no
impact on the overall completion time nor solution feasibility in
meeting the deadlines. Thus the new solution remains optimal.

Applying the above to all time intervals in which multiple flows are
scheduled, we obtain an optimal solution in which one single flow is
scheduled at any time point. At this stage, suppose any two flows $f'$
and $f$ are scheduled consecutively, with $t_f < t_{f'}$. Swapping the
two flows in the schedule with their respective time durations obviously
will not affect the feasibility or optimality of the solution.
Doing so repeatedly if necessary, we eventually obtain an
optimal solution in which the flows are scheduled individually in ascending
order of the deadlines.

\end{appendices}
\section*{Acknowledgement}
This work has been partially supported by the Swedish Research Council.

\bibliographystyle{IEEEtran}
\bibliography{ref}

%\bibliographystyle{IEEEtran}
%\bibliography{ForIEEEBib}% your bib database
\end{document}